\documentclass[11pt]{article}
\usepackage[utf8]{inputenc}
\usepackage{amsfonts,latexsym,amsthm,amssymb,amsmath,amscd,euscript,mathrsfs,graphicx}
\usepackage{framed}
\usepackage{fullpage}
\usepackage{color}
\usepackage{enumitem}
\usepackage{algorithm}
\usepackage{algorithmic}
\usepackage[obeyFinal,textsize=scriptsize,shadow]{todonotes}

\DeclareMathOperator{\disj}{\mathsf{MostlyDISJ}}


\newtheorem{theorem}{Theorem}
\newtheorem{proposition}{Proposition}
\newtheorem{lemma}{Lemma}
\newtheorem{fact}{Fact}
\newtheorem{corollary}{Corollary}

\theoremstyle{definition}
\newtheorem{Definition}{Definition}[section]

\theoremstyle{remark}
\newtheorem*{Remark}{Remark}


\newcommand{\BE}{\mathbb E}

\newcommand{\BP}{\mathbb P}

\newcommand{\BR}{\mathbb R}

\newcommand{\eps}{\varepsilon}

\title{Frequency Estimation with One-Sided Error}

\author{Piotr Indyk\thanks{MIT. Email: \texttt{indyk@mit.edu}}
\and Shyam Narayanan\thanks{MIT. Email: \texttt{shyamsn@mit.edu}}
\and David P. Woodruff\thanks{CMU. Email: \texttt{dwoodruf@cs.cmu.edu}}}

\begin{document}

\begin{titlepage}
\maketitle
\thispagestyle{empty}

\begin{abstract}
    Frequency estimation, also known as the Point Query problem, is one of the most fundamental problems in streaming algorithms.  Given a stream $S$ of elements from some universe $U=\{1 \ldots n\}$, the goal is to compute, in a single pass,  a short ``sketch'' of $S$ so that for any element $i \in U$, one can estimate the number $x_i$ of times $i$ occurs in $S$ based on the sketch alone. Two state of the art solutions to this problems are Count-Min and Count-Sketch algorithms. They are based on linear sketches, which means that the data elements can be deleted as well as inserted and sketches for two different streams can be combined via addition. However, the guarantees offered by Count-Min and Count-Sketch are incomparable. The frequency estimator $\tilde{x}$ produced by Count-Min sketch, using $O(1/\varepsilon \cdot \log n)$ dimensions, guarantees that (i) $\|\tilde{x} -x\|_{\infty} \le \varepsilon \|x\|_1$ with high probability, and (ii) $\tilde{x} \ge x$ holds deterministically. Also, Count-Min works under the assumption that $x \ge 0$. On the other hand, Count-Sketch, using $O(1/\varepsilon^2 \cdot \log n)$ dimensions, guarantees that $\|\tilde{x} -x\|_{\infty} \le \varepsilon \|x\|_2$ with high probability. A natural question is whether it is possible to design the ``best of both worlds'' sketching method, with error guarantees depending on the $\ell_2$ norm and space comparable to Count-Sketch, but (like Count-Min) also has the no-underestimation property.

    Our main set of results shows that the answer to the above question is negative. We show this in two incomparable computational models: linear sketching and streaming algorithms. Specifically, we show that:
    \begin{itemize}
        \item Any {\em linear sketch} satisfying the $\ell_p$ norm error guarantee  with probability at least $2/3$ and having the no-underestimation property must be of dimension of at least $\Omega(n^{1-1/p} /\varepsilon)$, even if the sketched vectors are non-negative. This bound is tight, as we also give a linear sketch of dimension $O(n^{1-1/p}/\varepsilon)$ satisfying these properties.
        \item Any {\em streaming algorithm}  satisfying the $\ell_p$ norm error guarantee  with probability at least $2/3$ and having the no-underestimation property must use at least  $\Omega(n^{1-1/p} /\varepsilon)$ bits. This holds even for algorithms that only allow insertions and make any constant number of passes over the stream. This bound is tight up to a logarithmic factor.  
\end{itemize}    
   
    We also study the complementary problem, where the sketch is required to not {\em over}-estimate, i.e., $\tilde{x} \le x$ should hold always. We show that any linear sketch satisfying this property and having the $\ell_p$ error guarantee with probability at least $2/3$ must be of dimension at least $\Omega(n^{1-1/p}/\varepsilon )$. We also show that this bound is tight up to polylogarithmic factors, by providing an appropriate linear sketch.
\end{abstract}

\end{titlepage}

\section{Introduction}
Frequency estimation, also known as the Point Query problem, is one of the most fundamental problems in streaming algorithms.  Given a stream $S$ of elements from some universe $U=[n]=\{1 \ldots n\}$, the goal is to compute a short ``sketch'' of $S$ so that for any element $i \in U$, one can estimate the number $x^S_i$ of times $i$ occurs in $S$ based on the sketch alone. 
Furthermore, the computation should be performed in a ``streaming'' fashion, by performing only one pass (or few passes) over the data. 
Over the last two decades, dozens of algorithms for this problem have been developed. Some of them, such as Count-Min~\cite{cormode2005improved} and Count-Sketch~\cite{charikar2002finding}, have found applications in multiple areas, including machine learning, natural language processing, network monitoring and security, and have been implemented in popular data processing libraries, such as Algebird and DataSketches. See~\cite{cormode2020small}, sections 3.4 and 3.5, for further discussion of applications.

Both Count-Min and Count-Sketch are {\em linear} sketches. Specifically, the algorithms compute a vector $Ax^S$, where $x^S$ is the frequency vector for the stream $S$, and $A$ is the sketch matrix defined by the respective algorithm.  The linearity has multiple benefits. First,  the data elements can be deleted as well as inserted\footnote{For Count-Min, this holds under the condition that $x^S \ge 0$ at the end of the stream - this is often referred to as the {\em strict turnstile model}.}. Furthermore, the sketch is {\em mergeable}~\cite{agarwal2013mergeable}: given two data streams $S$ and $S'$, the sketch of the concatenation of $S$ and $S'$ is equal to the sum of sketches for $S$ and $S'$, i.e.,  $A x^{S \circ S'}= A (x^{S} + x^{S'} ) = A x^{S} + Ax^{S'}$.  The linearity of sketches is also crucial for other applications such as {\em compressed sensing}~\cite{candes2006robust, donoho2006compressed, gilbert2010sparse}. 

The guarantees offered by Count-Min and Count-Sketch are incomparable. The frequency estimator $\tilde{x}$ produced by Count-Min sketch, using $O(1/\varepsilon \cdot \log n)$ dimensions, guarantees that (i) $\|\tilde{x} -x\|_{\infty} \le \varepsilon \|x\|_1$ with high probability, and (ii) $\tilde{x} \ge x$ holds always. Also, Count-Min works under the assumption that $x \ge 0$.  In contrast, Count-Sketch uses $O(1/\varepsilon^2 \cdot \log n)$ dimensions and guarantees that $\|\tilde{x} -x\|_{\infty} \le \varepsilon \|x\|_2$ with high probability. The $\ell_2$ norm is typically smaller (and never greater) than the $\ell_1$ norm, so for constant $\varepsilon$ the error guarantee of Count-Sketch is stronger than the error guarantee of Count-Min. However, Count-Min has the additional ``no underestimation'' property (ii), which is quite useful in applications. In particular, if the goal is to identify all elements $i$ such that $x_i \ge T$ for some threshold $T$, the no-underestimation property guarantees that the algorithm reports all such ``heavy'' elements, i.e., the algorithm has no false negatives. Since in many applications, such as traffic monitoring,  heavy elements indicate the presence of anomalies that need to be investigated further, preventing false negatives  is of paramount importance. In contrast, Count-Sketch can suffer from both false negatives and false positives, although the probability of either can be made arbitrarily small by increasing the sketch size. We also note that it is possible to use Count-Min to obtain a non-trivial sketch with $\ell_2$ error guarantee {\em and} no-underestimation property by setting  $\varepsilon$ equal to $1/\sqrt{n}$, which ensures that the error is at most $\|x\|_1/\sqrt{n} \le \|x\|_2$.  However, this sketch uses $O(\sqrt{n} \log n)$ dimensions, which is much larger than the space bound of Count-Sketch. Furthermore, one can simulate Count-Sketch using Count-Min sketch by doubling the sketch dimension (\cite{cormode2020small}, section 3.5), but the resulting estimate only satisfies the guarantees of Count-Sketch; in particular, it does not have the ``no underestimation'' property.

This state of affairs leads to a natural question: is it possible to design the ``best of both worlds'' sketching or streaming algorithm, with error guarantees depending on the $\ell_2$ norm and space comparable to Count-Sketch, but (like Count-Min) also has the no-underestimation property?  

\paragraph{Our results:} Our main contributions show that the answer to the above question is negative.  We consider this problem in two models: linear sketching and streaming algorithms. We show\footnote{The following statements assume that $\varepsilon$ is not ``too small'' as a function of $n$. Please see the relevant sections for the complete result statements.} that:
\begin{itemize} 
\item Any {\em linear sketch} satisfying the $\ell_2$ error guarantee  with probability at least $2/3$ and having the no-underestimation property must be of {\em dimension} at least $\Omega(\sqrt{n}/\varepsilon)$, even if the sketched vectors are non-negative. The result can be generalized to any $\ell_p$ norm, yielding a dimension lower bound of $\Omega(n^{1-1/p} /\varepsilon)$. 
\item Any {\em streaming algorithm}  satisfying the $\ell_p$ error guarantee  with probability at least $2/3$ and having the no-underestimation property must use at least  $\Omega(n^{1-1/p} /\varepsilon)$ {\em bits}. This holds even for algorithms that only allow insertions (not deletions), and that are allowed $O(1)$ passes.
\end{itemize}

We complement these lower bounds by showing that they are (almost) tight. Specifically, for any fixed $p > 1$, we provide a linear sketch of dimension $O(n^{1-1/p}/\varepsilon)$ that works for non-negative vectors satisfying the above properties. This matches  our lower bound for any fixed $p>1$. The sketch is obtained by refining the analysis of Count-Min for $\ell_p$ norms with $p>1$, improving (by a logarithmic factor in $n$) over the na\"ive bound sketched above. Furthermore, for insertion-only streams, we can implement this algorithm using $O(\log n)$ bit counters per dimension (as long as the stream length is at most polynomial in $n$), 
which yields an $O(n^{1-1/p}/\varepsilon \cdot  \log n)$ bit space bound for fixed $p > 1$. Therefore, our streaming lower bound is tight up to a factor of $ \log n$. In the message-passing multi-party communication model, where  streaming problems were studied recently in \cite{jw19}, this $\log n$ factor can be improved and we obtain tight bounds up to an $O(\log \log n + \log(1/\epsilon))$ factor; see Section \ref{sec:upper} for details. 

We note that our two lower bounds are incomparable. On the one hand, any linear sketch also yields a streaming algorithm, so a streaming lower bound can in principle be used to derive a lower bound on linear sketches as well. On the other hand,  the entries of sketching matrices are real numbers with an arbitrary or even unbounded precision, so translating streaming lower bounds into sketching lower bounds induces an overhead that depends on the precision. 
Furthermore,  our sketching  result lower bounds  the number of {\em dimensions}, while the streaming result lower bounds the number of {\em bits}. Thus, our sketching lower bound is tight, while the streaming lower bound is tight up to a logarithmic factor.

Finally, we study the complementary problem, where the sketch is required to not {\em over}-estimate, i.e., $\tilde{x} \le x$ should hold always. We show that any linear sketch satisfying this property and having the $\ell_p$ error guarantee with probability at least $2/3$ must be of dimension at least $\Omega(n^{1-1/p}/\varepsilon )$, even for $x \ge 0$. We also show this is tight up to polylogarithmic factors in $n$, by giving an appropriate linear sketch. 

\subsection{Related work} To the best our knowledge, the closest prior work is the paper~\cite{brody2011streaming}. It considers streaming algorithms for a collection of problems, including norm estimation and heavy hitters, and shows that when such algorithms are required to output a number that does not underestimate (or overestimate) the true value, then such algorithms must use linear space $\Omega(n)$, unless the error is measured in $\ell_p$ norm for $p = 1$. Their lower bound for the heavy hitters problem (motivated, as in our case, by the Count-Min algorithm), is particularly relevant to the results in this paper. In the heavy hitters problem, the goal is to (i) report all elements $i$ such that $x_i \ge \phi \|x\|_p$, while (ii) not reporting any elements $i'$ for which $x_{i'} < \phi' \|x\|_p$. Here,  $0<\phi' < \phi<1$ are constants that depend on $p$ but not on $n$. The paper shows that if either condition (i) or condition (ii) holds with probability 1, then any streaming algorithm must use $\Omega(n)$ space, even for insertions-only streams, unless $p=1$. 

On the surface, their result might appear to be stronger than our lower bound, and to contradict our upper bound (which is approximately $O(\sqrt{n})$ for $p=2$). This, however, is not the case, because the definition of ``heavy hitters'' in~\cite{brody2011streaming} is {\em relative} with respect to the total norm $\|x\|_p$. This means that estimating the frequencies by itself (as in Count-Min) is not sufficient to identify heavy elements, and one must also estimate the {\em norm} of the stream to be able to determine which elements have estimates exceeding the threshold of $\phi \|x\|_p$. Indeed, the heavy hitter lower bound in~\cite{brody2011streaming} crucially relies on the hardness of estimating the norm with one-sided error. In contrast, our lower bounds apply directly to the frequency estimation problem.

More broadly, communication complexity protocols with one-sided error have been studied extensively in communication complexity. Indeed, the aforementioned lower bound of \cite{brody2011streaming} relies on lower bounds for one-sided-error communication protocols from the seminal paper of~\cite{buhrman1998quantum}. See~\cite{brody2011streaming} for a detailed overview of this line of research. 

{\bf Deterministic streaming algorithms:} Streaming algorithms with one-sided error are generalizations of deterministic streaming algorithms. In the context of frequency estimation and related problems, deterministic algorithms have been studied e.g., in \cite{nelson2014deterministic,li2018deterministic}. However, those algorithms work only for the $\ell_1$ norm. For the $\ell_2$ norm, a recent paper ~\cite{KPW21} showed an $\Omega(\sqrt{n}/\eps)$ lower bound for the deterministic heavy hitters problem with thresholds $\phi=\eps$ and $\phi'=\eps/2$. Our streaming lower bound is a strengthening of that result, showing that the lower bound holds already for algorithms which can be randomized, provided they have one-sided error.

\subsection{Preliminaries}

{\bf Notation:} We will use $[n]$ to denote the set $\{1 \ldots n\}$. As stated in the introduction, we use $x^S$ to denote the frequency vector induced by the stream $S$, i.e., $x^S_i$ is the number of times $i$ appears in $S$. We will often drop the superscript when $S$ is clear from the context. 

For a vector $x \in \BR^n$, we denote it's $\ell_p$ norm as $\|x\|_p = \left(\sum_{i = 1}^{n} |x_i|^p\right)^{1/p}.$ For a real-valued matrix $M$, we let $M^T$ denote the transpose of $M$, $rk(M)$ denote the rank of $M$, and if $M$ is square, we let $tr(M)$ denote the trace of $M$. Finally, we define the Frobenius norm $\|M\|_F = \sqrt{M^T M} = \sqrt{\sum M_{ij}^2}.$

{\bf Count-Min Sketch:}  The sketch is formed by creating $t$ distinct hash functions $h_{\ell}:U \rightarrow [k]$ and $t$ arrays $C_{\ell}$ of size $k$ each. The total space used is of size $tk$.
The algorithm computes $C_{\ell}$'s such that at the end of the stream we have
$C_{\ell}[b] = \sum_{i: h_{\ell}(i)=b} x_i$ for each $b \in [k]$.
For each $i \in U$, the frequency estimate $\tilde{x}_i$ is equal to $\min_{\ell} C_{\ell}[ h_{\ell}(i)]$. 
Note $\tilde{x}_i \ge x_i$ holds always as long as $x \ge 0$. In the context of streaming algorithms we refer to the latter assumption as the ``strict turnstile model''. 
We also note that the mapping $A: \BR^n \to \BR^{kt}$ that maps $x$ to $C_1 \ldots C_t$ as defined above is linear.


\subsection{Overview of Techniques}
\paragraph{Sketching lower bounds:}
The existence of a sketching algorithm implies there is a family of $k \times n$-dimensional matrices $\mathcal{A}$ such that given $A$ and $Ax$ for $A \sim \mathcal{A}$, one can recover an approximation $\hat{x}$ of $x$. For all $i \in [n]$, this approximation satisfies $\hat{x}_i \ge x_i$  (in the no-underestimation case) or $\hat{x}_i \le x_i$  (in the no-overestimation case), and with probability at least $2/3$, $|\hat{x}_i-x_i| \le \varepsilon \cdot \|x\|_p$ for all $i \in [n]$. Our goal is to show that such a guarantee is not possible unless $k = \Omega(\varepsilon^{-1} \cdot n^{1-1/p})$. 
It is well-known that one can replace the randomness on the matrix $A$ with randomness over the vector $x$ that we wish to estimate. In other words, if such a guarantee is possible, then for any distribution $\mathcal{D}$ over vectors $x \in \BR^n$, there must exist a fixed matrix $\mathcal{A}$ such that given $Ax$, where $x \sim \mathcal{D},$ we can recover a good estimate $\hat{x}$ of $x$. But due to our deterministic no underestimation/overestimation assumption, we have that for any $x$ (possibly not drawn from $\mathcal{D}$), $\hat{x}_i \ge x_i$.

In the case of no-underestimation, the distribution $\mathcal{D}$ is simply a point mass distribution over $v = (1/n, 1/n, \dots, 1/n) \in \BR^n$. We show that for any matrix $A$, there exists a nonnegative vector $x$ such that $Av = Ax$ but some component $x_i$ is large. Because of our deterministic no-underestimation assumption, we cannot underestimate $x_i$, so we will grossly overestimate $v_i$. In the case of no-overestimation, we instead consider the distribution over vectors where all coordinates are $1/T$ for some properly chosen $T$, except one uniformly random coordinate $i$ is chosen to be $1$ (call this vector $v^{(i)}$). This time, we show that for the majority of these vectors, there is some nonnegative $x$ such that $Ax = Av^{(i)}$, but $x_i = 0$. Because of the deterministic no-overesimation assumption, we cannot underestimate $x_i$, so we grossly underestimate $v^{(i)}_i$ for the majority of the vectors $v^{(i)}$.

To prove that these vectors $x$ exist, we write the claims that $Av = Ax,$ $x_i$ is large (or is $0$), and $x$ is nonnegative as a linear program and consider the dual linear program. In both cases, we reduce to two similar matrix inequalities, both of which can be captured by showing that if a matrix has diagonal entries at least $1$ but the sum of the absolute values of each row is less than some $T$, then the rank must be at least $\Omega(n/T)$. This generalizes a well-known fact for $T = 1$, which essentially states that diagonally dominant matrices are invertible \cite{taussky49diagdom}. The rough intuition for the more general matrix inequality comes from assuming that the largest entry in each row is the diagonal entry. In this case, we can bound the sum of squares of each row by $T$, meaning that the Frobenius norm of $M$ is at most $T \cdot n$, whereas the trace is $n$. Some simple inequalities relating the Frobenius norm, trace, and rank are sufficient to establish an $\Omega(n/T)$-rank lower bound. While we cannot assume that the largest entry in each row is the diagonal entry, we show how to swap rows and columns accordingly without affecting the rank, and make the diagonal entries the largest entry in many of the rows, or at least if we restrict to certain submatrices. This idea will allow us to prove the general rank lower bound.

{\bf Streaming Lower bound:}  Our result is based on the lower bound for a $k$-player communication promise problem called ``Mostly Set-Disjointness'', introduced in~\cite{KPW21}. In this problem, denoted by $\disj_{n, l, k}$,  each Player $j \in [k]$ receives an 
	$n$-dimensional input vector $X_j = (X_{j,1}, \dotsc, X_{j,n})$ where 
	$X_{j,i}\in \{ 0,1\}$. 
The input to the protocol falls into either of 
	the following cases. In the NO case, we have that the sets represented by $X_j$'s are disjoint. 
	In the YES case,  all sets are disjoint except for a  a unique element $i\in[n]$ which occurs in exactly $l$ sets.
The paper \cite{KPW21} shows that, for any fixed $0<c<1$, the conditional information complexity of $\disj_{n, cn, k}$ (for an appropriately defined distribution over the inputs) is $\Omega(n)$.

The starting point of the proof is a standard reduction between streaming algorithms and multiparty protocols, where each of the $k$ players in  $\disj_{n, ck, k}$ creates a local stream and  runs a streaming algorithm  on the concatenation of its local streams. For an $O(1)$-pass streaming algorithm, this requires $O(1)$ rounds, where each player speaks once in a round. The last player then looks at the output of the streaming algorithm, and decides on
a course of action. With probability at least $2/3$ and using at most one additional round and $O(k)$ additional bits of communication with each of the players, Player $k$ declares an answer to the $\disj_{n, ck, k}$ problem. With the remaining probability of at most $1/3$, 
Player $k$ asks all players to run an independent execution of the streaming algorithm again on their local streams. This process repeats until Player $k$ finally declares an answer to the $\disj_{n, ck, k}$ problem. 

An important fact is that when Player $k$ declares an answer to the $\disj_{n, ck, k}$ problem, the output is correct with probability $1$. We show this using the properties of a no-underestimation $\ell_p$-point query algorithm. However, it may take many independent executions of the streaming algorithm until Player $k$ declares an answer. We can terminate the protocol after $O(k)$ independent executions, incurring a probability of error for solving $\disj_{n, ck, k}$ of at most $\delta = \exp(-\Theta(k))$, but the total communication for solving $\disj_{n, ck, k}$ will now be $O(k^2 s)$, where $s$ is the space complexity of our streaming algorithm. Indeed, in each of $k$ rounds, each of $k$ players passes the state of the streaming algorithm (or more precisely, posts it to the blackboard) to the next player $O(1)$ times (since the streaming algorithm uses $O(1)$ passes). As the randomized communication complexity of $\disj_{n, ck,k}$ is $\Theta(n)$, this gives us an $s = \Omega(n/k^2)$ lower bound. For the important setting of $p = 2$ and $\varepsilon = \Theta(1)$, we would set $k = \Theta(\sqrt{n})$ and only obtain a trivial $\Omega(1)$ lower bound. 

The main insight is instead to argue that although there may be up to $O(k)$ rounds of communication, the expected number of rounds is $O(1)$, and consequently since the conditional information cost is a quantity {\it measured in expectation}, the additional rounds
do not degrade the lower bound, since they occur with geometrically decreasing probabilities. 

{\bf Sketching Algorithms:} We start with the no-underestimation algorithm. In standard Count-Min sketch with the $\ell_1$-guarantee, one can show that for any $x_i$, the sum of the other coordinates that are hashed to the same bucket as $x_i$ does not exceed $\varepsilon \cdot \|x\|_1$ with at least $2/3$ probability. Count-Min repeats this hashing procedure $O(\log n)$ times, which ensures high accuracy for every $x_i$. Since $\varepsilon/n^{1-1/p} \cdot \|x\|_p \ge \varepsilon \cdot \|x\|_1$, one can apply the same Count-Min sketch with a hash table of size $\varepsilon^{-1} \cdot n^{1-1/p}$ instead of size $\varepsilon^{-1}$. However, if $\varepsilon/n^{1-1/p} \cdot\|x\|_p$ and $\varepsilon \cdot \|x\|_1$ were actually close in value, this would mean that most of the coordinates of $x$ were  small in absolute value, which means we can actually obtain strong concentration of the sum of other items hashed to the same bucket as $x_i$. Indeed, since there are $\varepsilon^{-1} \cdot n^{1-1/p}$ buckets now, for any fixed $i$, with very small failure probability no heavy hitter will be mapped to the same bucket as $i$, and we can use a Chernoff-type argument to bound the sum of remaining items with $1/\text{poly}(n)$ failure probability. Overall, our final algorithm is very similar to Count-Min, but we only need $O(1)$ copies of the hash table as opposed to $O(\log n)$ copies. This will allow us to obtain an $O(\varepsilon^{-1} \cdot n^{1-1/p})$-length sketch.

For the no-overestimation algorithm, we combine the ideas of Count-Min with known deterministic $\ell_1$-point query algorithms, which are able to operate with sketch length $O(\varepsilon^{-2} \log n)$ \cite{nelson2014deterministic}. Since the point query algorithms are deterministic, they are not hard to modify so that they never overestimate. Unfortunately, we cannot directly use this, as for the $\ell_p$-point query, we would only get a $\tilde{O}(\varepsilon^{-2} \cdot n^{2-2/p})$-length sketch, since in the worst case, $\varepsilon/n^{1-1/p} \cdot \|x\|_1 = \varepsilon \cdot \|x\|_p.$ Instead, we will only apply the deterministic point query algorithm for $\varepsilon$ a small constant. Our approach, roughly, is to use a Count-Min-type hashing to split the stream into buckets. We can keep track of the total mass of each bucket, and run a deterministic point-query algorithm on the bucket so that every $x_i$ is not overestimated, but not underestimated by more than $0.1 \cdot \|x^B\|_1,$ where $x^B$ is the substream generated by the elements mapped to the same bucket $B$ that $i$ is mapped to. By running a small number of copies of this procedure, we can ensure that for each $i$, the bucket containing $i$ has small norm in one of the copies, which will be sufficient for our sketch. Our streaming algorithms are obtained by implementing the aforementioned sketching methods.

\section{No-Underestimation Sketching Lower Bound}

\begin{theorem} \label{mainNoUnder}
    Fix $1 \le T \le n.$ Let $v = \left(\frac{1}{n}, \frac{1}{n}, \dots, \frac{1}{n}\right) \in \BR^n$. Then, there exists an absolute constant $c > 0$ such that for any $k \le c \cdot n/T$ and any real-valued matrix $A \in \BR^{k \times n}$, there exists some $x \in \BR^n$ such that $Ax = Av,$ $x$ has only nonnegative entries, and $\max x_i \ge T/n.$
\end{theorem}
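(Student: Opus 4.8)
\emph{Proof plan.} The plan is to argue by contradiction. Suppose that for the given $A \in \BR^{k \times n}$ no nonnegative $x$ satisfies $Ax = Av$ and $\max_i x_i \ge T/n$. Then for each coordinate $i \in [n]$, the linear program $\max\{ e_i^T x : Ax = Av,\ x \ge 0 \}$ is feasible (since $v$ itself is a feasible point, as $v \ge 0$) and its objective $x_i$ is bounded above by $T/n$, since by assumption every feasible $x$ has $x_i < T/n$; hence its optimum is attained and lies strictly below $T/n$. Applying LP strong duality to the dual $\min\{ (Av)^T y : A^T y \ge e_i \}$ produces $y^{(i)} \in \BR^k$ with $A^T y^{(i)} \ge e_i$ and $(Av)^T y^{(i)} < T/n$. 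Setting $w^{(i)} := A^T y^{(i)} \in \BR^n$, the first condition gives $w^{(i)} \ge 0$ and $w^{(i)}_i \ge 1$, while the second gives $\tfrac1n \mathbf{1}^T w^{(i)} = v^T w^{(i)} = (Av)^T y^{(i)} < T/n$, i.e.\ $\sum_j w^{(i)}_j < T$. Moreover every $w^{(i)}$ lies in the row space of $A$.

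Assemble these into the matrix $M \in \BR^{n \times n}$ whose $i$-th row is $w^{(i)}$, so $M_{ij} = w^{(i)}_j$. Then $M$ is entrywise nonnegative, $M_{ii} \ge 1$ for all $i$, every row obeys $\sum_j |M_{ij}| = \sum_j M_{ij} < T$, and $rk(M) \le rk(A) \le k$ since all rows of $M$ lie in the row space of $A$. So the theorem reduces to the linear-algebraic lemma flagged in the overview: \emph{any $M \in \BR^{n \times n}$ with $|M_{ii}| \ge 1$ for all $i$ and $\sum_j |M_{ij}| \le T$ for all $i$ has $rk(M) = \Omega(n/T)$.} Taking $c$ below the implied constant then forces $k \ge rk(M) = \Omega(n/T) > k$, a contradiction.

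It remains to prove the rank lemma. The easy case is when, in every row, the diagonal entry is (in absolute value) a largest entry of that row: then $\sum_j M_{ij}^2 \le (\max_j |M_{ij}|)\bigl(\sum_j |M_{ij}|\bigr) \le |M_{ii}| \cdot T$, so $\|M\|_F^2 \le T\, tr(M)$; combining this with the standard inequality $tr(M)^2 \le rk(M)\cdot\|M\|_F^2$ (Cauchy--Schwarz over the at most $rk(M)$ nonzero eigenvalues, together with Schur's inequality $\sum_i |\lambda_i(M)|^2 \le \|M\|_F^2$) and with $tr(M) = \sum_i M_{ii} \ge n$ yields $rk(M) \ge tr(M)/T \ge n/T$. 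In general this dominance fails, but since permuting rows and columns independently preserves both the rank and all row $\ell_1$-norms, the plan is to restrict to an $\Omega(n)$-size submatrix and reorder its rows and columns so that in the reordered submatrix every new diagonal entry is the largest entry of its row (and is still $\ge 1$); the same trace/Frobenius/rank estimate then gives $rk(M) = \Omega(n/T)$.

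I expect this last reduction to be the main obstacle. The obvious move — route row $i$ to its argmax column $\sigma(i) = \arg\max_j |M_{ij}|$ — cannot serve as a column permutation when $\sigma$ is far from injective, i.e.\ when the maxima of many rows fall in a common small set of columns. Resolving this calls for a combinatorial argument: either find a linear-size set of rows on which $\sigma$ is injective (and keep those rows together with their image columns), or separately treat the rows whose maxima pile onto a few ``popular'' columns, e.g.\ by deleting those columns — there can be only few of them, since each carries a lot of $\ell_1$ mass and $\sum_{i,j} |M_{ij}| \le nT$ — while retaining $\Omega(n)$ of the unit-diagonal structure. The delicate point is to carry this out losing only constant factors rather than an extra factor of $T$, which the crude bound $\|M\|_F^2 \le nT^2$ would cost.
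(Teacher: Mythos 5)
Your LP-duality reduction is exactly the paper's: dualize $\max\{e_i^Tx : Ax=Av,\ x\ge 0\}$ for each $i$, extract the row vectors $w^{(i)}=A^Ty^{(i)}$ with $w^{(i)}_i\ge 1$, $w^{(i)}\ge 0$, $\sum_j w^{(i)}_j\le T$, stack them into an $n\times n$ matrix $M$ of rank at most $k$, and invoke a rank lower bound for matrices with unit diagonal and row $\ell_1$-norms at most $T$. That half is correct and complete. Your treatment of the ``easy case'' of the rank lemma (diagonal dominant in each row, so $\|M\|_F^2\le T\cdot tr(M)$ and $rk(M)\ge tr(M)^2/\|M\|_F^2\ge n/T$) is also fine, and in fact slightly cleaner than the paper's symmetrization argument for the trace--Frobenius--rank inequality.

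The genuine gap is the general case of the rank lemma, which is the heart of the theorem and which you explicitly leave as a plan (``I expect this last reduction to be the main obstacle''). As stated, your two proposed routes do not yet constitute an argument: the argmax map $\sigma$ can fail to be injective on any set of more than one row (all row maxima in a single column), and the ``popular columns carry a lot of mass'' bound is too weak, since a routed row only certifies mass at least $1$ in its argmax column, so one column of total mass $n\le nT$ can absorb every row's maximum; after deleting it you lose a diagonal entry and must recurse, and it is exactly the control of this recursion without losing a factor of $T$ that is missing. The paper avoids the ``make the diagonal the row maximum'' framing altogether: it greedily selects, over $n$ rounds, the globally largest remaining entry $M_{i_t,j_t}$ (crossing out its row and column), which produces a generalized diagonal along which entries are nonincreasing and dominate the entire remaining submatrix; it then buckets the indices $t$ by dyadic level $|M_{i_t,j_t}|\in[r,2r]$, so that in the submatrix indexed by a bucket every entry is at most $2r$ while the new diagonal entries are at least $r$, giving $tr=\Theta(rc_r)$, $\|\cdot\|_F^2=O(rTc_r)$, hence rank $\Omega(rc_r/T)$ for that bucket; finally, since the original diagonal entries are at least $1$, at least $n/2$ of the selected entries have magnitude at least $1$, so by pigeonhole some level has $rc_r=\Omega(n)$, yielding rank $\Omega(n/T)$. (The paper also notes the lemma follows from Theorem 1.3 of Fan and Hoffman \cite{fan1954rankmatrix}, so citing that would be another way to close your gap.) Until this combinatorial step is carried out, your proof of Theorem \ref{mainNoUnder} is incomplete.
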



First, we note the following standard fact. We include a proof in the appendix for completeness. 

\begin{lemma} \label{RankLB1}
    For any square (possibly non-symmetric) matrix $R$, $rk(R) \gtrsim tr(R)^2/||R||_F^2.$
\end{lemma}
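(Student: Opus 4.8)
The plan is to prove the inequality with constant $1$, i.e.\ $rk(R) \ge tr(R)^2/\|R\|_F^2$, by combining the Cauchy--Schwarz inequality for the Frobenius inner product with the fact that an orthogonal projection onto an $r$-dimensional subspace has Frobenius norm exactly $\sqrt{r}$. If $R = 0$ the claim is trivial (both sides are $0$), so assume $R \neq 0$ and set $r := rk(R) \ge 1$.

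First I would introduce $P \in \BR^{n \times n}$, the orthogonal projection onto the column space $\mathrm{col}(R)$. Since every column of $R$ lies in $\mathrm{col}(R)$ and $P$ acts as the identity on that subspace, we have $PR = R$, and hence $tr(R) = tr(PR)$. Because $P$ is symmetric, $tr(PR) = tr(P^T R) = \langle P, R\rangle_F$, the Frobenius inner product. Applying Cauchy--Schwarz to this inner product gives $|tr(R)| = |\langle P, R\rangle_F| \le \|P\|_F \cdot \|R\|_F$.

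Second I would evaluate $\|P\|_F$. Since $P$ is an orthogonal projection, $P^T = P$ and $P^2 = P$, so $\|P\|_F^2 = tr(P^T P) = tr(P^2) = tr(P)$. The trace of an orthogonal projection equals the dimension of its range (its eigenvalues are $0$'s and $1$'s, with the number of $1$'s equal to $\dim\mathrm{col}(P)$), and $\dim \mathrm{col}(R) = rk(R) = r$; hence $\|P\|_F = \sqrt{r}$. Combining this with the previous display yields $|tr(R)| \le \sqrt{r}\,\|R\|_F$, i.e.\ $tr(R)^2 \le r \cdot \|R\|_F^2$, which rearranges to $rk(R) = r \ge tr(R)^2/\|R\|_F^2$, as claimed.

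There is no genuine obstacle here: the lemma is classical, and the proof is a two-line manipulation once the right object ($P$) is in hand. The only points requiring a little care are the identity $PR = R$ (which needs $P$ to project onto the \emph{column} space of $R$, though projecting onto the row space and running the same argument on $R^T$ works equally well) and the evaluation $tr(P) = rk(P)$ for an orthogonal projection. An alternative route via singular/eigenvalues — writing $|tr(R)| = |\sum_i \lambda_i(R)| \le \sum_{i=1}^{r}|\lambda_i(R)| \le \sqrt{r}\,\bigl(\sum_i |\lambda_i(R)|^2\bigr)^{1/2} \le \sqrt{r}\,\bigl(\sum_i \sigma_i(R)^2\bigr)^{1/2} = \sqrt{r}\,\|R\|_F$, using Schur's majorization inequality — also works, but the projection argument avoids invoking spectral facts about non-symmetric matrices and so is the one I would present.
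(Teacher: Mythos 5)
Your proof is correct, and it takes a genuinely different route from the paper's. The paper reduces to the symmetric case: it sets $S = (R+R^T)/2$, notes that for symmetric matrices the bound follows from $tr(S) = \sum_i \lambda_i(S)$ and $\|S\|_F^2 = \sum_i \lambda_i(S)^2$ via Cauchy--Schwarz over the at most $rk(S)$ nonzero eigenvalues, and then transfers the bound back using $rk(S) \le 2\, rk(R)$, $tr(S) = tr(R)$, and $\|S\|_F \le \|R\|_F$. That symmetrization costs a factor of $2$, which is why the paper states the lemma with $\gtrsim$ rather than $\ge$. Your argument --- writing $tr(R) = \langle P, R\rangle_F$ for $P$ the orthogonal projection onto $\mathrm{col}(R)$, applying Cauchy--Schwarz for the Frobenius inner product, and using $\|P\|_F^2 = tr(P) = rk(R)$ --- works directly on $R$ with no symmetrization and yields the sharp constant $1$, i.e.\ $rk(R) \ge tr(R)^2/\|R\|_F^2$. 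All the steps check out: $PR = R$ because $P$ fixes each column of $R$, and $tr(PR) = tr(P^TR) = \langle P, R\rangle_F$ because $P$ is symmetric. Since the lemma is only stated up to a constant, either proof suffices for the paper's purposes, but yours is tighter and arguably cleaner; the paper's has the minor virtue of invoking only the eigenvalue characterizations of trace and Frobenius norm for symmetric matrices.
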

We next need a key matrix inequality. We recently learned that the following result is known, and follows directly from Theorem 1.3 of \cite{fan1954rankmatrix}. We provide an independent proof in the appendix.

\begin{lemma} \label{RankLB2}
    Let $1 \le T \le n$, and suppose that $M \in \BR^{n \times n}$ is such that for all $1 \le i \le n,$ $|M_{ii}| \ge 1$ and $\sum_{1 \le j \le n} |M_{ij}| \le T$. Then, the rank $k$ of $M$ satisfies $k = \Omega(n/T)$.
\end{lemma}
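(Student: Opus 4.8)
The plan is to prove Lemma~\ref{RankLB2} via a rank-preserving reduction to the ``diagonal is largest in its row'' situation, followed by the Frobenius/trace/rank inequality of Lemma~\ref{RankLB1}. The overall strategy mirrors the ``rough intuition'' sketched in the techniques overview: if, in every row, the diagonal entry were the entry of largest absolute value, then $\sum_j |M_{ij}| \le T$ together with $|M_{ii}| \ge |M_{ij}|$ would give $\sum_j M_{ij}^2 \le |M_{ii}| \sum_j |M_{ij}| \le |M_{ii}| \cdot T$, hence $\|M\|_F^2 \le T \sum_i |M_{ii}|$, while $\operatorname{tr}(M) \ge n$ (after a sign fix); plugging into Lemma~\ref{RankLB1} gives $rk(M) \gtrsim (\sum_i |M_{ii}|)^2 / (T \sum_i |M_{ii}|) \ge n/T$. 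So the crux is the reduction: how to get into this nice situation, or into a submatrix where it holds, without decreasing the rank too much.

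First I would do two cheap normalizations that do not change the rank. (1) Multiply row $i$ by $\operatorname{sign}(M_{ii})$; this makes every diagonal entry real and $\ge 1$, so $\operatorname{tr}(M) \ge n$, and leaves $\sum_j |M_{ij}| \le T$ unchanged. (2) To handle the possibility that some off-diagonal entry in row $i$ exceeds the diagonal, I would argue by a pigeonhole/selection argument. For each row $i$, let $\pi(i)$ be an index achieving $\max_j |M_{ij}|$; since $\sum_j |M_{ij}| \le T$ and $|M_{i\pi(i)}| \ge 1$, there can be at most $T$ indices $j$ with $|M_{ij}| \ge 1$, so in particular $|M_{ij}| \ge |M_{i\pi(i)}|$ forces $|M_{i\pi(i)}| \le T$ always but gives no immediate structure. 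The key observation is that the map $i \mapsto \pi(i)$ cannot be too many-to-one on a large set: if $\pi(i) = \pi(i') = c$ for many rows $i$, then column $c$ has many entries of absolute value $\ge 1$; but one can iteratively restrict to a principal submatrix on an index set $I$ on which $\pi$ restricted to $I$ maps $I$ into $I$ and is injective (a permutation of $I$), with $|I| = \Omega(n)$ — this is exactly the ``swap rows and columns'' idea. Concretely, build a directed graph on $[n]$ with edge $i \to \pi(i)$; the out-degree is $1$ and, because each column has at most $T$ large entries ($\ge 1$), restricting attention is subtle; I expect to instead delete a bounded fraction of indices to make $\pi$ a bijection, then permute coordinates (which is a rank-preserving conjugation by a permutation matrix) so that $\pi$ becomes the identity, i.e., the largest-magnitude entry of each row sits on the diagonal.

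Once we are on a principal submatrix $M'$ of size $m = \Omega(n)$ whose diagonal is (up to the permutation) the row-maximum and still satisfies $|M'_{ii}| \ge 1$, $\sum_j |M'_{ij}| \le T$ and $\operatorname{tr}(M') \ge m$, the estimate above gives $\|M'\|_F^2 \le T \operatorname{tr}(M') $ roughly, and Lemma~\ref{RankLB1} yields $rk(M') \gtrsim \operatorname{tr}(M')^2/\|M'\|_F^2 \gtrsim m/T = \Omega(n/T)$. Since $rk(M) \ge rk(M')$, we are done. A subtlety I should be careful about: after permuting columns, the diagonal entry $M'_{i\sigma(i)}$ that is the row-max is no longer literally $M'_{ii}$ unless I simultaneously permute so that $\sigma$ becomes the identity on $I$; I need $\sigma$ to be a genuine permutation of $I$ for this, which is why the injectivity step matters. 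The main obstacle, then, is that selection step: showing one can discard only a constant fraction (or even a $(1 - 1/T)$-fraction, which would still leave rank $\Omega(n/T^2)$ — too weak, so it must be a constant fraction) of the coordinates to make $i \mapsto \pi(i)$ injective. I would handle this by noting that on the ``bad'' rows where the diagonal is not the max we can afford to be cruder: any row whose max entry has absolute value $> $ some threshold is rare because the rows have bounded $\ell_1$ norm, and a Markov-type counting bound limits how many rows can point into any fixed column, letting us extract the desired large principal submatrix. (If the independent proof in the appendix is cleaner, it may instead directly massage $M^T M$ or invoke Fan's theorem as cited, but the plan above is the self-contained route I would attempt first.)
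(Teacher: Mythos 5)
Your overall plan — reduce to a submatrix whose permuted diagonal is the row maximum, then apply Lemma~\ref{RankLB1} via $\|M'\|_F^2 \le T \cdot tr(M')$ — has the right shape, and the final trace/Frobenius computation is correct. The problem is the selection step, and one claim you lean on is simply false: you assert that ``each column has at most $T$ large entries ($\ge 1$),'' but the hypothesis bounds \emph{row} $\ell_1$ norms, not column norms, so a single column can carry $\Omega(n)$ entries of magnitude $\Theta(T)$ (e.g.\ $M_{i1} = T-1$ for all $i>1$ and $M_{ii}=1$). Hence the in-degree of the map $i \mapsto \pi(i)$ is unbounded, and nothing you write actually produces a set $I$ of size $\Omega(n)$ on which the restricted row-argmax $\pi_I$ is a bijection of $I$. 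You flag this yourself as ``the main obstacle,'' but the Markov/pigeonhole hand-wave that follows does not address it, and the structure you need — a linear-size $I$ and a bijection $\sigma$ of $I$ with $M_{i\sigma(i)}$ the row maximum over $I$ and $|M_{i\sigma(i)}| \ge 1$ — is strictly stronger than anything the paper establishes; it is not clear it is achievable in general.

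The paper's proof avoids this by not insisting on the row-max property at all. It greedily peels off a ``pseudo-diagonal'' $(i_1,j_1),\dots,(i_n,j_n)$, where $(i_t,j_t)$ maximizes $|M_{ij}|$ over the rows and columns not yet used. This automatically produces a bijection without any deletion, and diagonal dominance guarantees that at least the first $n/2$ chosen entries have magnitude $\ge 1$. The cost is that $M_{i_tj_t}$ is only the maximum of row $i_t$ over the \emph{remaining} columns $\{j_t,\dots,j_n\}$, not over all chosen columns, so the diagonal of the permuted matrix need not dominate its row and the one-shot bound $\|M'\|_F^2 \le T\, tr(M')$ does not apply. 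The paper repairs this with a dyadic decomposition: restricting to the indices $t$ with $|M_{i_tj_t}| \in [r,2r]$, every entry of that principal submatrix is at most $2r$, so the trace/Frobenius estimate gives rank $\gtrsim r\,c_r/T$, and a pigeonhole over powers of two finds a scale with $r\,c_r = \Omega(n)$. If you wish to pursue your route, the missing lemma you would need is precisely the bijection statement above; absent that, switching to the greedy-plus-dyadic argument is the cleaner way through.
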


We now prove Theorem \ref{mainNoUnder}.

\begin{proof}
For each $1 \le i \le n$, consider the linear program $\max x_i: Ax = Av, x \ge 0$ where $A \in \BR^{k \times n}$ and $x, v \in \BR^{n}.$ Writing $x_i = e_i^T x$ for $e_i$ the $i$th unit vector, this program's dual is $\min_y v^T A^T y: A^T y \ge e_i$. If we assume the theorem is false for some fixed $A$, then for all $i$, there is some row vector $z^{(i)} = y^T A \ge e_i$ (coordinate-wise) such that $z^{(i)} \in RowSpan(A)$ (since $z^{(i)} = y^T A$) and $\sum_j z^{(i)}_j \le T$ (since $v^T A^T y \le T/n$). Therefore, letting $M$ be the matrix such that $M$'s $i$th row is $z^{(i)}$, we have that $M_{ii} \ge 1$ for all $i$, $M_{ij} \ge 0$ for all $j$, and $\sum_{j} |M_{ij}| \le T.$ Finally, all of $M$'s rows are in the rowspan of $A$, so $M$'s rank is at most $k$.

Therefore, by Lemma \ref{RankLB2}, we  have that $k = \Omega(n/T)$.
\end{proof}


\begin{corollary}
\label{cor:noUnder}
    Any linear sketch that returns $\hat{x}$ such that $\hat{x}_i \ge x_i$ holds deterministically and $\hat{x}_i \le x_i + \varepsilon \cdot ||x||_p$ {for all $i \in U$} holds with probability at least $2/3$ must use  $\Omega(\min(n, \varepsilon^{-1} \cdot n^{1 - 1/p}))$ rows.
\end{corollary}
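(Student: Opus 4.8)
The plan is to derive Corollary~\ref{cor:noUnder} from Theorem~\ref{mainNoUnder} by a standard averaging (Yao-type) argument combined with a suitable choice of $T$. First I would fix a linear sketch with sketch matrix family $\mathcal{A}$ and reconstruction procedure satisfying the stated guarantees, and let $k$ be the number of rows. Since the input distribution can be taken to be a point mass on $v = (1/n,\dots,1/n)$, the randomness over $\mathcal{A}$ can be fixed: there exists a single matrix $A \in \BR^{k\times n}$ such that, when run on the fixed input $v$, the reconstruction $\hat x$ satisfies $|\hat x_i - v_i| \le \varepsilon\|v\|_p$ for all $i$ with probability at least $2/3$; but crucially, the no-underestimation property $\hat x_i \ge x_i$ holds \emph{deterministically} for every input, so in particular the reconstruction is a fixed function of $Au$ for every $u$, and $\hat x$ on input $u$ never underestimates any coordinate of $u$.

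The key step is then to invoke Theorem~\ref{mainNoUnder} with the right value of $T$. Note $\|v\|_p = n^{1/p - 1}$, so the error guarantee on input $v$ reads: with probability $\ge 2/3$ (over the now-fixed $A$, this is just a statement that holds), $\hat x_i \le v_i + \varepsilon n^{1/p-1} = (1 + \varepsilon n^{1/p})/n$ for all $i$. Set $T := 1 + \varepsilon n^{1/p}$ (and note we may assume $T \le n$, i.e. $\varepsilon$ is not too small, which is where the $\min(n, \cdot)$ in the statement comes from; if $T > n$ the bound $\Omega(n)$ is what we aim for and a cleaner argument with a single heavy coordinate suffices). Apply Theorem~\ref{mainNoUnder} with this $T$: if $k \le c \cdot n/T$, there is a nonnegative $x$ with $Ax = Av$ and $\max_i x_i \ge T/n = (1+\varepsilon n^{1/p})/n$. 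Since the reconstruction depends only on $Ax = Av$, it outputs the same $\hat x$ on input $x$ as on input $v$. By no-underestimation applied to input $x$, $\hat x_i \ge x_i$ for the heavy coordinate $i$, so $\hat x_i \ge T/n$. But by the error guarantee on input $v$, $\hat x_i \le T/n$ with the same $\hat x$; more carefully, we need strict contradiction, so I would take $T := 2(1 + \varepsilon n^{1/p})$ or perturb the LP target to get $\max_i x_i > v_i + \varepsilon\|v\|_p$ strictly, forcing $\hat x_i > v_i + \varepsilon\|v\|_p$, contradicting the point-query error bound on $v$. Hence $k > c\cdot n/T = \Omega(n/(1+\varepsilon n^{1/p})) = \Omega(\min(n, \varepsilon^{-1} n^{1-1/p}))$, since when $\varepsilon n^{1/p} \ge 1$ we have $n/T = \Theta(n/(\varepsilon n^{1/p})) = \Theta(\varepsilon^{-1} n^{1-1/p})$, and otherwise $n/T = \Theta(n)$.

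The main obstacle — really the only subtlety — is making the contradiction genuinely strict and handling the probability $2/3$ correctly: since $v$ is a single fixed vector, "with probability $\ge 2/3$" over the sketch's internal randomness becomes, after fixing $A$, either the guarantee holds for this $A$ or not, and one must argue a good $A$ exists in the support; then the deterministic no-underestimation property transfers that same $A$'s behavior to the adversarial input $x$ which shares the sketch $Av = Ax$. I would handle the strictness by choosing the constant in front of $T$ slightly larger (e.g. replacing $T$ by $2T$ in the application of Theorem~\ref{mainNoUnder} only changes the hidden constant $c$), so that the guaranteed heavy coordinate of $x$ exceeds $v_i + \varepsilon\|v\|_p$ by a constant factor, yielding a clean contradiction. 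Everything else is bookkeeping: translating $T = \Theta(1 + \varepsilon n^{1/p})$ into the claimed bound $\Omega(\min(n, \varepsilon^{-1} n^{1-1/p}))$ by case analysis on whether $\varepsilon \gtrsim n^{-1/p}$.
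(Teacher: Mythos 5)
Your proposal is correct and follows essentially the same route as the paper: fix the input to the all-$(1/n)$ vector $v$, invoke Theorem~\ref{mainNoUnder} to produce a nonnegative $x$ with $Ax = Av$ and a large coordinate, use the deterministic no-underestimation property to force $\hat v_i \ge x_i$, and then compare against the $\ell_p$ error guarantee on $v$. The only cosmetic difference is the parametrization: the paper sets $T = c\,n/k$ (so the hypothesis $k \le cn/T$ of the theorem holds by construction), reads off $\max_i \hat v_i \ge c/k$, and then solves the resulting inequality $c/k \le 1/n + \varepsilon n^{1/p-1}$ directly for $k$, which sidesteps the strictness worry entirely; you instead fix $T = \Theta(1 + \varepsilon n^{1/p})$ and argue by contradiction, which is why you need the extra factor of $2$. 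Both are valid; the paper's version is a bit tidier.
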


\begin{proof}
    Suppose $A$ is a randomized sketch matrix and the stream has vector $v = (\frac{1}{n}, \dots, \frac{1}{n})$. The sketch is $Av$, and the output $\hat{v}$ must satisfy $\hat{v}_i \ge x_i$ for all nonnegative vectors $x$ with $Ax = Av$ by our no underestimation assumption. Therefore, by Theorem \ref{mainNoUnder}, if $A$ has $k$ rows, if we choose $T = c \cdot n/k,$ then $\max \hat{v}_i \ge T/n = c/k.$ However, we are assuming that the sketching algorithm returns $\hat{v}_i \le v_i + \varepsilon \cdot ||v||_p = 1/n + \varepsilon/n^{1-1/p}$. Thus, $c/k \le 1/n + \varepsilon/n^{1-1/p},$ so $k \ge \Omega(\min(n, \varepsilon^{-1} \cdot n^{1-1/p}))$.
\end{proof}

\section{Multi-pass Insertion-Only Stream Space Lower Bound}
%
In the multiparty communication model we consider $k$-ary 
functions $F: \mathcal{L} \rightarrow\mathcal{Z}$ where $\mathcal{L} \subseteq 
\mathcal{X}_1 \times \mathcal{X}_2 \times \cdots \times \mathcal{X}_k$. There 
are $k$ players who receive inputs $X_1, \dotsc, X_k$, respectively. 
We consider protocols 
in the blackboard model where in a protocol $\pi$, the players speak in any order and possibly
multiple times, 
and the player who speaks next is determined by the protocol transcript.  
Each player's message is posted on a blackboard and is seen by all other players. 
A message of Player $i$ is a function of the messages on the blackboard thus far, $X_i$, 
and the private randomness of Player $i$. 
The $k$-th player's 
final message is the output of the protocol.
The communication cost of a multiparty protocol $\pi$ is the sum of the lengths 
of all individual messages. 
A protocol $\pi$ is a $\delta$-error protocol for the function $f$ if for every 
input $x\in \mathcal{L}$, the output of the protocol equals $f(x)$ with 
probability at least 
$1-\delta$. 
The randomized communication complexity $R_{\delta}(f)$ of $f$ 
is the cost of the cheapest randomized protocol that computes 
$f$ correctly on every input with error probability at most $\delta$, where the
probability is taken only over the private randomness of the players. 

For background on information complexity, see, e.g., \cite{bar2004information,bar2002complexity}.
Let $H(X)$ denote the Shannon entropy of the random variable $X$.
Let $H(X\ |\ Y)$ denote the conditional entropy of $X$ given $Y$. Let $I(X; Y) = H(X) - H(X\ |\ Y)$ denote the mutual information 
and $I(X; Y\ |\ Z)$ denote the conditional mutual information, for random variables $X, Y,$ and $Z$. 
\begin{proposition}\label{prop:mut}
Let $X, Y, Z, W$ be random variables.
\begin{enumerate}
\item If $X$ takes value in $\{1,2, \ldots, m\}$, then $H(X) \in [0, \log_2 m]$.

\item $H(X) \ge H(X\ |\ Y)$ and $I(X; Y)  = H(X) - H(X\ |\ Y) \ge 0$.

\item  For any random variables $X, Y, W, Z$, we have 
$I(X; Y\ |\ W) \le I(X; Y\ |\ Z, W) + H(Z)$. 

\item For any random variables $X_1, X_2, \ldots, X_n, Y$,
$\textstyle I(X_1, \ldots, X_n; Y) = \sum_{i = 1}^n I(X_i; Y\ |\ X_1, \ldots, X_{i-1})$.
\end{enumerate}
\end{proposition}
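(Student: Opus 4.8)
The plan is straightforward: all four items are classical facts of Shannon information theory (see, e.g., \cite{bar2004information, bar2002complexity} and the standard references therein), and I would derive them from two basic ingredients — (a) the non-negativity of relative entropy, i.e.\ Gibbs' inequality, which is itself an immediate consequence of Jensen's inequality applied to the concave function $\log$, and (b) the chain rule for Shannon entropy, $H(X_1,\dots,X_n) = \sum_{i=1}^n H(X_i \mid X_1,\dots,X_{i-1})$, together with its conditional analogue obtained by conditioning every term on an extra variable.

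For part (1), the lower bound $H(X) \ge 0$ is immediate since each summand $-p_i \log_2 p_i$ is non-negative for $p_i \in [0,1]$. For the upper bound I would apply Jensen's inequality, writing $H(X) = \sum_i p_i \log_2(1/p_i) \le \log_2\!\big(\sum_i p_i \cdot (1/p_i)\big) = \log_2 m$; equivalently, $\log_2 m - H(X)$ is the relative entropy of the law of $X$ to the uniform law on $\{1,\dots,m\}$, hence non-negative. For part (2), I would observe that $I(X;Y)$ is exactly the relative entropy between the joint law of $(X,Y)$ and the product of the marginals, so $I(X;Y) \ge 0$ by Gibbs' inequality, and then $H(X) - H(X\mid Y) = I(X;Y) \ge 0$ gives $H(X) \ge H(X\mid Y)$. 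I would also record the conditional form $H(X\mid Z) \ge H(X\mid Y,Z)$ (equivalently $I(X;Y\mid Z)\ge 0$), proved by the same argument applied pointwise in $Z$ and then averaged, since it is used in part (3).

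Part (4) is the chain rule for mutual information: subtracting the two entropy expansions $H(X_1,\dots,X_n) = \sum_i H(X_i\mid X_{<i})$ and $H(X_1,\dots,X_n\mid Y) = \sum_i H(X_i\mid X_{<i},Y)$ and using $I(X_1,\dots,X_n;Y) = H(X_1,\dots,X_n) - H(X_1,\dots,X_n\mid Y)$ yields $I(X_1,\dots,X_n;Y) = \sum_i \big(H(X_i\mid X_{<i}) - H(X_i\mid X_{<i},Y)\big) = \sum_i I(X_i;Y\mid X_{<i})$. For part (3), I would expand $I(X;(Y,Z)\mid W)$ in two different orders via the conditional chain rule: $I(X;(Y,Z)\mid W) = I(X;Y\mid W) + I(X;Z\mid Y,W) = I(X;Z\mid W) + I(X;Y\mid Z,W)$. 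Rearranging, $I(X;Y\mid W) = I(X;Y\mid Z,W) + I(X;Z\mid W) - I(X;Z\mid Y,W) \le I(X;Y\mid Z,W) + I(X;Z\mid W)$, where the inequality discards the non-negative term $I(X;Z\mid Y,W)$ (conditional form of part (2)). Finally $I(X;Z\mid W) \le H(Z\mid W) \le H(Z)$, again using the conditional form of part (2) and monotonicity of entropy under conditioning, which completes the bound.

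I do not expect any genuine obstacle here: the only real analytic input is Jensen's inequality / concavity of $\log$, and everything else is the algebra of entropy identities. If anything, the one place to be careful is the bookkeeping of the conditioning in part (3) — making sure the two chain-rule expansions of $I(X;(Y,Z)\mid W)$ are applied consistently and that the term dropped from the identity is indeed the non-negative one.
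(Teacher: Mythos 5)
Your proposal is correct. For parts (1), (2), and (4) the paper treats the statements as standard and gives no argument, so there is nothing to compare; your derivations from Gibbs' inequality and the entropy chain rule are the standard ones and are fine. The only part the paper actually proves is (3), and there your argument is a close variant of the paper's: both proofs expand a joint conditional mutual information in two chain-rule orders and then bound the leftover terms. The difference is in the grouping — the paper expands $I((X,Z);Y\mid W)$ and then unfolds the residual terms $I(Z;Y\mid W)$ and $I(Z;Y\mid X,W)$ into four conditional entropies before bounding, whereas you expand $I(X;(Y,Z)\mid W)$, which lets you drop the nonnegative term $I(X;Z\mid Y,W)$ directly and bound the remaining $I(X;Z\mid W)$ by $H(Z\mid W)\le H(Z)$ without ever leaving the mutual-information level. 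The two derivations produce algebraically equivalent identities (both residuals equal the interaction information $I(X;Y;Z\mid W)$), so this is the same idea; your version is slightly more streamlined because it avoids the four-term entropy bookkeeping, while the paper's version makes the dependence on ``$H(Z|\cdot)$'' more explicit, which is perhaps why it was written that way given that Part 2 is phrased in terms of entropy.
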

We note that part 3 of Proposition \ref{prop:mut} follows from the fact that, using the chain
rule for mutual information and expanding $I((X, Z); Y | W)$ in two different ways:
$$I(X ; Y\ |\ W) + I(Z ; Y\ |\ X, W)
= I(Z ; Y\ |\ W) + I(X ; Y\ |\ Z, W),$$
which then implies
$$I(X; Y\ |\ W) + H(Z\ |\ X, W) - H(Z\ |\ X, Y, W) = I(X; Y\ |\ Z, W) + H(Z\ |\ W) - H(Z\ |\ Y, W),$$
and so 
$I(X ; Y\ |\ W) = I(X ; Y\ |\ Z, W) + H(Z \ |\ X, Y, W) - H(Z \ |\ X, W) - H(Z \ |\ Y, W) + H(Z\ |\ W)$,
and then using that $H(Z \ |\ X, Y, W) \leq H(Z \ |\ X, W)$ and $H(Z\ |\ W) \leq H(Z)$ by Part 2, and
that $H(Z \ |\ Y, W) \geq 0$. 

\begin{fact}\label{fact:geometric} (Example 4.6.1 of \cite{leon1994probability})
Let $R$ be a geometric random variable with success probability $p$. Then
$H(R) = \log_2 \left (\frac{1}{p} \right ) + \frac{1-p}{p} \cdot \log_2 \left (\frac{1}{1-p} \right) = \frac{H(p)}{p}$, where $H(p) = p \log_2 \left (\frac{1}{p} \right) + (1-p) \log_2 \left (\frac{1}{1-p} \right )$. 
\end{fact}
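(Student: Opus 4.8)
The plan is to present the clean argument and note the routine one. Since $R$ is geometric with success probability $p$, it takes values in $\{1, 2, 3, \dots\}$ with $\Pr[R = j] = (1-p)^{j-1} p$, and the formula to be proven already pins down this indexing convention (it forces $\BE[R] = 1/p$). Directly from the definition,
\[
H(R) = -\sum_{j \ge 1} (1-p)^{j-1} p \log_2\!\left((1-p)^{j-1} p\right) = -\log_2(1-p)\sum_{j \ge 1}(j-1)(1-p)^{j-1}p \;-\; \log_2 p,
\]
where I have split the logarithm and used that the probabilities sum to $1$. The remaining sum is exactly $\BE[R-1] = \BE[R]-1 = \frac1p - 1 = \frac{1-p}{p}$, by the standard identity $\sum_{i \ge 0} i r^i = r/(1-r)^2$ with $r = 1-p$. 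Substituting gives $H(R) = \log_2(1/p) + \frac{1-p}{p}\log_2(1/(1-p))$, which is the first claimed expression; factoring $1/p$ out of $p\log_2(1/p) + (1-p)\log_2(1/(1-p)) = H(p)$ yields the second. The only point requiring care is that the defining series converges absolutely so the rearrangement is legitimate, which is immediate from the geometric decay of $(1-p)^{j-1}$.

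Alternatively — and this is the proof I would actually write, since it uses no series manipulation — I would invoke the recursive (grouping) property of Shannon entropy together with memorylessness. Let $B = \mathbf{1}[R = 1]$, a Bernoulli$(p)$ variable that is a deterministic function of $R$. Then
\[
H(R) = H(R, B) = H(B) + H(R \mid B) = H(p) + p\, H(R \mid B=1) + (1-p)\, H(R \mid B=0).
\]
Conditioned on $B = 1$ we have $R = 1$ surely, so $H(R \mid B=1) = 0$; conditioned on $B = 0$, the variable $R-1$ is again geometric with parameter $p$, so $H(R \mid B=0) = H(R)$. This yields the fixed-point equation $H(R) = H(p) + (1-p)H(R)$, i.e. $p\,H(R) = H(p)$, hence $H(R) = H(p)/p$; expanding $H(p)$ recovers the explicit form. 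One should check $H(R) < \infty$ before cancelling, which follows from the first-paragraph computation (or a crude tail bound).

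There is no genuine obstacle here — the statement is essentially a textbook exercise, which is why the paper cites it rather than proving it. The only things to be careful about are (i) the convergence/finiteness check that legitimizes rearranging the series or solving the linear equation for $H(R)$, and (ii) matching the indexing convention for the geometric distribution to the stated formula, i.e. taking $R \ge 1$ with $\Pr[R=j] = (1-p)^{j-1}p$ so that $\BE[R] = 1/p$.
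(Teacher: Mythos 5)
Your first computation is exactly the paper's own (commented-out) proof: expand $H(R) = -\sum_{j\geq 1}\Pr[R=j]\log_2\Pr[R=j]$, split the logarithm, and evaluate the two geometric sums; the only cosmetic difference is that you phrase the second sum as $\BE[R-1]$ while the paper differentiates the geometric series directly. Your second argument is a genuinely different route the paper does not take. Conditioning on $B=\mathbf{1}[R=1]$ and using memorylessness gives the fixed-point relation $H(R)=H(p)+(1-p)H(R)$, hence $H(R)=H(p)/p$. This avoids all series manipulation and makes the $H(p)/p$ form structurally transparent, but it buys that at the cost of needing $H(R)<\infty$ established separately before the cancellation is legitimate (the recursion alone is also satisfied by $H(R)=\infty$); you correctly flag this, and the series computation supplies it for free. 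Both arguments are correct and match the stated formula under the indexing $\Pr[R=j]=(1-p)^{j-1}p$ for $j\geq 1$, which is also the convention the paper uses.
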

\begin{Definition}
	Let $\pi$ be a randomized protocol whose inputs belong to 
	$\mathcal{K}\subseteq \mathcal{X}_1 \times \mathcal{X}_2 \dotsc \times 
	\mathcal{X}_k$. Suppose $((X_1, X_2, \dotsc, X_k), D)\sim \eta$ where 
	$\eta$ is a distribution over $\mathcal{K}\times\mathcal{D}$ for some 
	set $\mathcal{D}$. The conditional information cost of $\pi$ 
	with respect to $\eta$ is defined as:
	$cCost_{\eta}(\pi) = I(X_1, \dotsc, X_k; \pi(X_1, \dotsc, X_k) \mid D).$
	Here $\pi(X_1, \dotsc, X_k)$ denotes the
	transcript of the protocol $\pi$. 
	
\end{Definition}
\begin{Definition}
	The $\delta$-error conditional information complexity $CIC_{\eta, \delta}(f)$ of $f$, 
	with respect to $\eta$, is  
	the 
	minimum conditional information cost of a $\delta$-error protocol for 
	$f$ with respect to $\eta$.  
\end{Definition}
Note that in the definition of $CIC_{\eta, \delta}(f)$, the 
protocol is correct on every input with failure probability
	at most $\delta$, where the probability is taken only over the private coins of the players, 
	while we measure the information of the protocol with respect to the distribution $\eta$. 

\begin{fact} (See, e.g., Corollary 4.7 of \cite{bar2004information})
For all distributions $\eta$, we have $R_{\delta}(f) \geq CIC_{\eta, \delta}(f)$.  
\end{fact}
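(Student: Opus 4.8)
The plan is to prove the equivalent statement $CIC_{\eta,\delta}(f) \le R_{\delta}(f)$ directly, by showing that \emph{every} $\delta$-error protocol has conditional information cost at most its (worst-case) communication cost. Fix a distribution $\eta$ over $\mathcal{K}\times\mathcal{D}$, let $\pi$ be a $\delta$-error protocol for $f$ of communication cost $C$, and write $\Pi = \pi(X_1,\dots,X_k)$ for its transcript — a random variable whose randomness comes from the players' private coins and from $(X_1,\dots,X_k,D)\sim\eta$. Since the inputs and $D$ lie in finite sets and $\Pi$ is a finite-length binary string, every quantity below is a genuine Shannon entropy and the manipulations of Proposition~\ref{prop:mut} apply.

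The first step is to upper bound the conditional information cost by the unconditional entropy of the transcript:
$$cCost_\eta(\pi) = I(X_1,\dots,X_k;\Pi \mid D) = H(\Pi \mid D) - H(\Pi \mid X_1,\dots,X_k,D) \le H(\Pi \mid D) \le H(\Pi).$$
The first inequality drops the nonnegative term $H(\Pi \mid X_1,\dots,X_k,D)\ge 0$, and the second is ``conditioning cannot increase entropy''; both are instances of Proposition~\ref{prop:mut}(2).

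The second step is to bound $H(\Pi)$ by $C$. Putting $\pi$ into the standard normal form in which all messages are encoded with a uniquely decodable (e.g.\ prefix-free) binary code, the set of transcripts is exactly the set of root-to-leaf paths of a binary protocol tree in which every path has length at most $C$; hence there are at most $2^{C}$ possible transcripts, and $H(\Pi)\le \log_2 2^{C} = C$ by Proposition~\ref{prop:mut}(1). Combining the two steps gives $cCost_\eta(\pi)\le C$. Taking $\pi$ to be a protocol achieving (or approaching) the optimum $C = R_\delta(f)$ and noting that such a $\pi$ lies in the class of $\delta$-error protocols over which $CIC_{\eta,\delta}(f)$ is minimized, we conclude $CIC_{\eta,\delta}(f)\le cCost_\eta(\pi)\le R_\delta(f)$, which is the claim.

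The only delicate point — and the one I would be careful to state explicitly — is the inequality $H(\Pi)\le C$: it is not a fact about a single random variable, but rests on the convention that the worst-case communication cost controls $\log_2$ of the number of distinct transcripts. This needs the messages to be self-delimiting, so that the protocol tree really is binary of depth at most $C$; this normal form is assumed throughout communication complexity, and under it the bound is immediate. (If one instead measures communication in expectation, the same conclusion follows from the source-coding inequality, since $H(\Pi)$ is at most the expected length of any uniquely decodable encoding of $\Pi$.)
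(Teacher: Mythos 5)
Your proof is correct and is the standard textbook argument: $cCost_\eta(\pi) = I(X;\Pi\mid D) \le H(\Pi\mid D) \le H(\Pi) \le C$, then minimize over $\delta$-error protocols. The paper itself supplies no proof of this fact — it simply cites Corollary~4.7 of \cite{bar2004information} — and your derivation is essentially the one given there, including the (correctly flagged) dependence on a self-delimiting protocol normal form to get $H(\Pi)\le C$ from the worst-case communication bound.
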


\begin{Definition}
	Denote by $\disj_{n, l, k}$, the multiparty Mostly 
	Set-Disjointness problem in which for each $j \in [k]$, Player $j$ receives an 
	$n$-dimensional input vector $X_j = (X_{j,1}, \dotsc, X_{j,n})$ where 
	$X_{j,i}\in \{ 0,1\}$ and the input to the protocol falls into either of 
	the following cases:
	\begin{itemize}
		\item \textbf{NO:} For all $i\in[n]$, $\sum_{j\in [k]} 
		X_{j,i}\leq 1$
		\item \textbf{YES:} There is a unique $i\in[n]$ with 
		$\sum_{j\in [k]} X_{j,i} = l$, and for all other $i'\neq i,  
		\sum_{j\in [k]} X_{j,i'}\leq 1$
	\end{itemize}
        Player $k$ must output $1$ if the input is in the YES case and $0$ in the NO case. 
\end{Definition}

Let $\mathcal{L}\subset \{0, 1\}^k$ 
be the set of elements in $x \in 
\{0, 1\}^k$ with $\sum_{j\in [k]} x_j \leq 1$ or $\sum_{j\in [k]} x_j = l$. 
Let $\mathcal{L}_n\subset \mathcal{L}^n$ denote the set of valid inputs to the 
$\disj_{n,l,k}$ function. 
Define distribution $\eta$ over $\mathcal{L}_n\times [k]^n$: 
 for each $i\in [n]$ pick $D_i\in[k]$ uniformly at random and sample
	$X_{D_i,i}$ uniformly from $\{0 ,1\}$  and for all $j'\neq D_i$ set 
	$X_{j', i}=0$. 
Let $\mu_0$ be the distribution for a given $i\in [n]$.  
Let $\eta_0 = \mu_0^n$.

\begin{theorem}(See the one-line proof of Theorem 3.8 of \cite{KPW21}
): For any $0 < \delta, c < 1$ and $2 \leq k 
	\le \frac{\log(\frac{1}{2e\delta})}{c \log(e/c)}$, 
$CIC_{\eta_0, \delta}(\disj_{n, ck, k}) = \Omega(n(1-c)^2).$
\end{theorem}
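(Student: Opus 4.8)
The statement is Theorem 3.8 of \cite{KPW21}; the plan is to recall its proof, the standard ``information statistics'' direct-sum argument of \cite{bar2004information} specialized to $\disj$. Fix an arbitrary $\delta$-error protocol $\pi$ for $\disj_{n,ck,k}$, write $\Pi$ for its transcript, and recall $cCost_{\eta_0}(\pi)=I(X;\Pi\mid D)$ with $(X,D)\sim\eta_0=\mu_0^n$, $D=(D_1,\dots,D_n)$; let $X^{(i)}=(X_{1,i},\dots,X_{k,i})$ denote the $i$-th coordinate block. \emph{Direct sum:} because $\eta_0$ is a product over coordinates, the blocks $X^{(1)},\dots,X^{(n)}$ are independent given $D$, so the chain rule (Proposition~\ref{prop:mut}, part 4) gives
$$I(X;\Pi\mid D)=\sum_{i=1}^n I(X^{(i)};\Pi\mid D,X^{(1)},\dots,X^{(i-1)})\ \ge\ \sum_{i=1}^n I(X^{(i)};\Pi\mid D),$$
where the inequality uses that $X^{(i)}$ is independent of $X^{(1)},\dots,X^{(i-1)}$ given $D$. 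So it suffices to prove the per-coordinate bound $I(X^{(i)};\Pi\mid D)=\Omega((1-c)^2)$ for each $i$.

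\emph{The single-coordinate primitive:} conditioning on $D_{-i}$, the inputs on all coordinates $i'\neq i$ are NO-type strings each player can generate from private randomness, so $\pi$ induces a $k$-player blackboard protocol $\tau$ acting on a single block $X^{(i)}\sim\mu_0$ --- i.e.\ $0^k$ with probability $\tfrac12$ and $e_{D_i}$ with $D_i$ uniform otherwise --- with $I(X^{(i)};\Pi\mid D)\ge \mathbb{E}_{D_{-i}}\, I(X^{(i)};\tau(X^{(i)})\mid D_i)$. Since $\pi$ is $\delta$-error and planting $ck$ ones at coordinate $i$ (with the remaining coordinates NO-type) is a valid YES instance whose special element is $i$, the protocol $\tau$ must output $0$ with probability $\ge 1-\delta$ on $0^k$, and output $1$ with probability $\ge 1-\delta$ on the indicator vector $\mathbf{1}_T$ of \emph{every} player-set $T$ with $|T|=ck$.

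\emph{The crux} --- and the step I expect to be the main obstacle --- is the single-coordinate information lower bound: every such $\tau$ has $I(X^{(i)};\tau(X^{(i)})\mid D_i)=\Omega((1-c)^2)$. For this I would use the Hellinger-distance machinery for blackboard protocols from \cite{bar2004information}: writing $\Psi_z$ for the transcript distribution of $\tau$ on input $z$ and $h(\cdot,\cdot)$ for Hellinger distance, one has $I(X^{(i)};\tau\mid D_i)\gtrsim \mathbb{E}_j\, h^2(\Psi_{0^k},\Psi_{e_j})$. If this average were $o((1-c)^2)$, a Markov/counting step would produce a set $T$ with $|T|=ck$ on which all $h^2(\Psi_{0^k},\Psi_{e_j})$, $j\in T$, are tiny; the cut-and-paste and Pythagorean (subadditivity) properties of blackboard protocols then force $h^2(\Psi_{0^k},\Psi_{\mathbf{1}_T})$ to be tiny as well, contradicting that $\tau$ separates $0^k$ (output $0$) from $\mathbf{1}_T$ (output $1$) with probability $\ge 1-2\delta$, which makes $h^2(\Psi_{0^k},\Psi_{\mathbf{1}_T})=\Omega(1)$. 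The hypothesis $k\le \log(1/(2e\delta))/(c\log(e/c))$ --- equivalently, $\delta$ below roughly $\binom{k}{ck}^{-1}$ --- is exactly the slack this embedding-and-selection argument needs. Summing the per-coordinate bounds gives $cCost_{\eta_0}(\pi)=\Omega(n(1-c)^2)$, and since $\pi$ was an arbitrary $\delta$-error protocol, $CIC_{\eta_0,\delta}(\disj_{n,ck,k})=\Omega(n(1-c)^2)$.
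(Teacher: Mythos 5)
The paper does not prove this theorem; it is quoted from \cite{KPW21} with the pointer ``see the one-line proof of Theorem 3.8''. Your proposal is therefore a reconstruction of an external proof, not of an argument that appears in this paper.

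As a reconstruction, the direct-sum step is fine: $I(X^{(i)};\Pi\mid D,X^{(1)},\dots,X^{(i-1)})\ge I(X^{(i)};\Pi\mid D)$ because the coordinate blocks are conditionally independent given $D$. The per-coordinate crux, which you correctly flag as the obstacle, does not go through as you sketch it. Chaining from $\Psi_{0^k}$ to $\Psi_{\mathbf{1}_T}$ via one-bit flips and the triangle inequality bounds $h(\Psi_{0^k},\Psi_{\mathbf{1}_T})$ by a sum of $|T|=ck$ Hellinger terms; even if one could identify each of those terms with an $h(\Psi_{0^k},\Psi_{e_j})$ (which cut-and-paste does not deliver, since after the first flip the intermediate inputs are no longer neighbors of $0^k$), squaring costs a factor of $(ck)^2$, and a small average $\mathbb{E}_j\,h^2(\Psi_{0^k},\Psi_{e_j})$ is ruled out only at scale $\Omega((1-c)/(ck)^2)$ per coordinate. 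That is much weaker than the claimed per-coordinate constant $\Omega((1-c)^2)$; for large $k$ it is weaker even than the classical $\Omega(n/k)$ disjointness bound, whereas the whole content of the mostly-disjointness theorem is that the information cost is $\Omega(n)$ for fixed $c$, independent of $k$. The hypothesis $\delta\lesssim(c/e)^{ck}$ is not, as you suggest, just bookkeeping slack for a Markov selection step; a $k$-independent per-coordinate bound has to exploit correctness against all $\binom{k}{ck}$ YES indicators $\mathbf{1}_T$ simultaneously, and the exponentially small error is what makes that possible. So you have the right scaffolding (direct sum, single-coordinate primitive), but the key lemma as you have stated it would not give the theorem, and the argument actually used in \cite{KPW21} cannot be the naive chaining you describe.
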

\begin{corollary}\label{cor:main}
For $c = e/4$ and $0 \leq \delta \leq e^{-ke/2}/(2e)$,
$CIC_{\eta_0, \delta} (\disj_{n, ck, k}) = \Omega(n).$ 
\end{corollary}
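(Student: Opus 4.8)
The plan is to simply instantiate the preceding theorem (the one quoted from \cite{KPW21}) with the concrete parameter $c = e/4$ and verify that the assumed range of $\delta$ is exactly what is needed to place $k$ below the threshold appearing in that theorem. The lower bound $k \ge 2$ required by the theorem is part of our standing setup (there are at least two players), so the only thing to check is the upper bound $k \le \frac{\log(1/(2e\delta))}{c\log(e/c)}$.

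First I would estimate the denominator. Since $c = e/4$ we have $e/c = 4$, so $c\log(e/c) = \tfrac{e}{4}\log 4 \le \tfrac{e}{4}\cdot 2 = \tfrac{e}{2}$, using $\log 4 \le 2$ (with equality in base $2$ and strict inequality in the natural base). Next I would rewrite the hypothesis on $\delta$: from $\delta \le e^{-ke/2}/(2e)$ we get $\tfrac{1}{2e\delta} \ge e^{ke/2}$, and hence $\log\!\big(\tfrac{1}{2e\delta}\big) \ge \tfrac{ke}{2}$ (when $\delta = 0$ the left-hand side is $+\infty$ and the bound holds vacuously). Combining the two estimates gives
\[
\frac{\log(1/(2e\delta))}{c\log(e/c)} \;\ge\; \frac{ke/2}{e/2} \;=\; k,
\]
so the hypothesis $2 \le k \le \frac{\log(1/(2e\delta))}{c\log(e/c)}$ of the theorem is satisfied.

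Applying the theorem then yields $CIC_{\eta_0,\delta}(\disj_{n, ck, k}) = \Omega\!\big(n(1-c)^2\big) = \Omega\!\big(n(1 - e/4)^2\big)$, and since $1 - e/4$ is a fixed positive constant (about $0.32$), this is $\Omega(n)$, which is the claim. There is essentially no obstacle here: the corollary is a direct specialization, and the only mild care needed is the elementary inequality $c\log(e/c) \le e/2$ for $c = e/4$ together with the observation that the prescribed interval for $\delta$ is tailored precisely to make $\log(1/(2e\delta)) \ge ke/2$.
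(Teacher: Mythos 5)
Your proposal is correct and is exactly the intended argument: the corollary is a direct specialization of the quoted theorem, and your verification that $\delta \le e^{-ke/2}/(2e)$ forces $k$ below the threshold $\log(1/(2e\delta))/(c\log(e/c))$ when $c = e/4$, together with the observation that $(1-e/4)^2$ is a fixed positive constant, is precisely what the paper implicitly relies on (it provides no separate proof beyond pointing to the one-line proof of the KPW21 theorem).
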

The above holds for communication protocols with any number of rounds
of communication.\\ 

\noindent {\bf Main Lower Bound:} Let $p \geq 1$ be any positive real number, which we assume is a constant independent of $n$. We now prove our $\Omega(n^{1-1/p}/\varepsilon)$ bits of space lower bound for any randomized $O(1)$-pass streaming algorithm in the insertion-only model, which, given a vector $x \in \{0, 1, 2, \ldots, M\}^n$ of insertions to its coordinates, outputs a vector $\tilde{x}$ 
with the following two properties:
    (1) With probability at least $2/3$, $\|\tilde{x} - x\|_{\infty} \leq \varepsilon \|x\|_p$, and  
    (2) with probability $1$, we have $\tilde{x} \geq x$. 
If a streaming algorithm $S$ satisfies the above two properties we call it a {\it no-underestimation $\ell_p$-point query algorithm}. For the lower bound it will suffice for $M$ to be less than $n$. 
Let $S$ be a no-underestimation $\ell_p$-point query algorithm. We construct a protocol $\Pi_S$ to solve $\disj_{n, ck, k}$, where
$c = e/4$ and $k = 4 \varepsilon n^{1/p}$. The protocol is described in Algorithm \ref{fig:1}. 

\begin{algorithm}[H]
	\caption{Construction of Protocol $\Pi_S$ from a no-underestimation $\ell_p$-point query algorithm $S$}
	\label{fig:1}
	{\bfseries Input}: For $j \in [k]$, Player $j$ receives the input $X_j$ to the 
	$\disj_{n, ck, k}$ problem, where $c = e/4$ and $k = 4 \varepsilon n^{1/p}$. Let $S$ be a 
	no-underestimation $\ell_p$-point query algorithm.\\
	{\bfseries Output}: Player $k$ declares whether the input to $\disj_{n, ck, k}$ is a YES or NO instance.\\
	{\bfseries Procedure}:
	\begin{algorithmic}[1]
		\STATE For $j = 1, \ldots, k$, Player $j$ creates a stream $T^j$ of insertions of items $i$ for which $X_{j,i} = 1$. Player $j$ computes $S(T^1 \circ \cdots \circ T^j)$, 
		where $T^1 \circ \cdots \circ T^j$ is the concatenation of the first $j$ streams, and Player $j$  posts the state of the streaming algorithm to the blackboard. 
		\STATE Player $k$ computes the output $\tilde{x}$ of $S(T^1 \circ \cdots \circ T^k)$. Let $I = \{i : \tilde{x}_i \geq c k\}$. 
		\STATE If $I = \emptyset$, Player $k$ terminates the protocol and outputs ``NO instance". 
		\STATE Else if $|I| = \{i\}$ for an $i \in [n]$, Player $k$ posts $i$ to the blackboard, and for $j = 1, \ldots, k$, Player $j$ posts $X_{j,i}$ to the blackboard. If $x_i = ck$, Player $k$ terminates the protocol and outputs ``YES instance". Else $x_i \in \{0,1\}$, and Player $k$ terminates the protocol and outputs ``NO instance". 
		\STATE Else $|I| > 1$, and Player $k$ writes ``start over" on the blackboard. Goto Step 1.
		\end{algorithmic}
\end{algorithm}

\begin{lemma}\label{lem:correctness} (Always Correct)
For any $1/(e n^{1/p}) < \varepsilon < 1$, given an $O(1)$-pass no-underestimation $\ell_p$-point query algorithm $S$, the protocol $\Pi_S$
solves $\disj_{n, ck, k}$ with probability $1$, where $c = e/4$ and $k = 4 \varepsilon n^{1/p}$.  
\end{lemma}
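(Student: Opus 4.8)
The plan is to show that whenever Player $k$ actually declares an answer (Steps 3 or 4), that answer is correct, regardless of the internal randomness of $S$; since the protocol only terminates in Steps 3 or 4, this gives correctness with probability $1$. First observe what the stream $T^1 \circ \cdots \circ T^k$ computes: its frequency vector is exactly $x$, where $x_i = \sum_{j \in [k]} X_{j,i}$. In the NO case every $x_i \in \{0,1\}$, so $\|x\|_p \le n^{1/p}$, hence $\varepsilon \|x\|_p \le \varepsilon n^{1/p} = k/4 < ck = (e/4)k$ (using $c = e/4 > 1$, wait — careful: $c = e/4 \approx 0.68 < 1$, so I should double-check the threshold $ck$ vs.\ the error bound; in fact $ck = (e/4)\cdot 4\varepsilon n^{1/p} = e\varepsilon n^{1/p} > \varepsilon n^{1/p} \ge \varepsilon\|x\|_p$ in the NO case, which is the inequality I want). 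In the YES case there is a unique $i^*$ with $x_{i^*} = ck$ and all other coordinates in $\{0,1\}$, so $\|x\|_p \le (ck)^p \cdot \ldots$ — more simply $\|x\|_p^p \le (ck)^p + n \le (ck)^p + n$, and I only need that $\varepsilon\|x\|_p$ is small relative to the gap $ck - 1$ between the heavy coordinate and the light ones; since $\varepsilon\|x\|_p \le \varepsilon(ck + n^{1/p}) = \varepsilon ck + k/4$, and one checks this is less than $(ck-1)/2$ for the given parameter regime, so the heavy coordinate's true value minus error still exceeds $ck$ and every light coordinate plus error stays below $ck$ — I'll carry out this elementary computation using $1/(en^{1/p}) < \varepsilon < 1$ and $k = 4\varepsilon n^{1/p} \ge 4$.

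Next I use the two guaranteed properties of $S$. By property (2), $\tilde{x} \ge x$ with probability $1$. Hence in the YES case, $\tilde{x}_{i^*} \ge x_{i^*} = ck$, so $i^* \in I$ always; in particular $I \ne \emptyset$ in the YES case, so Step 3 (outputting ``NO instance'') is \emph{never} reached on a YES instance — this already shows the Step 3 branch is always correct. It remains to handle Step 4, i.e.\ the case $I = \{i\}$ is a singleton and Player $k$ directly reads off $x_i = \sum_j X_{j,i}$ from the blackboard. If $x_i = ck$, the input must be a YES instance (a coordinate summing to $ck$ can only occur in the YES case, by the promise), so ``YES instance'' is correct. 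If $x_i \in \{0,1\}$: I claim the input must then be a NO instance. Indeed, if it were a YES instance, the unique heavy coordinate $i^*$ satisfies $x_{i^*} = ck \notin \{0,1\}$, so $i^* \ne i$; but as shown above $i^* \in I$ always, contradicting $|I| = \{i\}$ being a singleton equal to $\{i\}$. So ``NO instance'' is correct. In every terminating branch the declared answer matches $\disj_{n,ck,k}$, so the protocol is correct with probability $1$ (whenever it terminates; termination with probability $1$ is immediate since Step 5 restarts with fresh independent randomness each time and Steps 3--4 together have positive probability).

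The main obstacle is the Step 4 soundness argument together with pinning down the parameter arithmetic: I must verify that no NO instance and no non-heavy coordinate of a YES instance can land in $I$ under the \emph{worst-case} realization of $S$'s error — but property (2) only bounds $\tilde{x}$ from below, not above, so in principle a light coordinate could be wildly overestimated into $I$. The resolution is that this is exactly why the protocol \emph{verifies} $x_i$ on the blackboard in Step 4 before committing: overestimation can put spurious indices into $I$, but it cannot remove the genuine heavy index $i^*$, so a singleton $I = \{i\}$ with $x_i \in \{0,1\}$ rules out the YES case purely combinatorially via the promise, never relying on the accuracy of $\tilde{x}$ on light coordinates. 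The only place accuracy is used is to guarantee the heavy coordinate is \emph{caught} (which needs only the lower bound $\tilde{x}\ge x$, not accuracy at all) and later, in the companion lemma bounding the number of restarts, to argue that with good probability $I$ is actually the singleton $\{i^*\}$; that quantitative part is not needed here.
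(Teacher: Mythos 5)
Your proof is essentially the paper's proof: the key observations are identical --- the no-underestimation property alone (not accuracy) guarantees the heavy coordinate $i^*$ is always in $I$ in a YES instance, so Step~3 can never fire on a YES instance; and Step~4's blackboard verification of the exact value of $x_i$ makes the singleton branch sound combinatorially via the promise, since any $i$ with $x_i\in\{0,1\}$ landing alone in $I$ forces a NO instance (otherwise $i^*\in I$ would contradict $|I|=1$). You also correctly recognize that the quantitative error bound $\varepsilon\|x\|_p$ is not needed for correctness and belongs to the information-cost lemma, which matches the paper's separation of concerns.

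Two small cleanup remarks. First, the aside $k=4\varepsilon n^{1/p}\ge 4$ is not implied by the hypotheses; the hypothesis $\varepsilon>1/(en^{1/p})$ gives only $k>4/e$. What is actually needed --- and what the paper isolates explicitly --- is $ck>1$, i.e.\ $e\varepsilon n^{1/p}>1$, which is exactly the stated lower bound on $\varepsilon$ and is used to make the Step~4 test ``$x_i=ck$ vs.\ $x_i\in\{0,1\}$'' well-defined. Second, the paragraph estimating $\varepsilon\|x\|_p$ against $(ck-1)/2$ is, as you yourself note, superfluous here and can be dropped entirely; the correctness argument never appeals to the accuracy guarantee.
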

\begin{proof}
Let $I = \{i : \tilde{x}_i \geq ck\}$ be the value of the set $I$ when Player $k$ outputs and terminates the protocol. Then it must be that $|I| = 0$ or $|I| = 1$. 

In a YES instance, we have $x_i = ck$ for a unique value of $i$, and since $\tilde{x} \geq x$ for a no-underestimation algorithm $S$ with probability $1$, we have $\tilde{x}_i \geq ck$ and thus $i \in I$. Consequently, $I = \{i\}$, and
$X_i = ck$, and so Player $k$ will terminate and output ``YES instance" in Step 4. 

In a NO instance, if $\|\tilde{x}\|_{\infty} < ck$, then $I = \emptyset$ and Player $k$ will terminate
and output ``NO instance" in Step 3. Otherwise, $\|\tilde{x}\|_{\infty} \geq ck$ and so $|I| = \{i\}$
for an $i \in [n]$. Since we are in a NO instance, $x_i \in \{0,1\}$, and Player $k$ terminates
the protocol and outputs ``NO instance" in Step 4. This assumes that $ck > 1$, that is, that
$e \cdot \varepsilon n^{1/p} > 1,$ or equivalently, $\varepsilon > 1/(e n^{1/p}).$

Thus, for both YES and NO instances, $\Pi_S$ solves
$\disj_{n, ck, k}$ with probability $1$. 
\end{proof}


\begin{lemma}\label{lem:cic} (Conditional Information Cost)
For any $\varepsilon$ with $1/((e-2) n^{1/p}) \leq \varepsilon < 1/4$, given an $O(1)$-pass no-underestimation $\ell_p$-point query algorithm $S$ with space $s$ and parameter $\varepsilon$, 
and for the distribution 
$\eta_0$ defined earlier, the protocol $\Pi_S$ satisfies the following
conditional information cost bound: $cCost_{\eta_0}(\Pi_S) = O(s k),$
where $k = 4 \varepsilon n^{1/p}$. 
\end{lemma}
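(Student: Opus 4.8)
The plan is to bound the conditional information cost by the entropy of the transcript, $cCost_{\eta_0}(\Pi_S) = I(X_1,\dots,X_k;\Pi_S \mid D) \le H(\Pi_S \mid D) \le H(\Pi_S)$, and to estimate $H(\Pi_S)$ by splitting the transcript across iterations of the repeat loop. Write $\Pi_S = (\Pi^{(1)},\Pi^{(2)},\dots)$, where $\Pi^{(t)}$ is the (possibly empty) part of the transcript produced during the $t$-th execution of Steps~1--5. The first step is to bound the length of one iteration: in Step~1 each of the $k$ players posts the state of $S$ once per pass, which is $O(sk)$ bits since $S$ makes $O(1)$ passes; Step~4 posts an index $i\in[n]$ together with one bit $X_{j,i}$ per player, i.e.\ $O(k+\log n)$ bits; and Step~5 posts $O(1)$ bits. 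So whenever the $t$-th iteration occurs, $\Pi^{(t)}$ has bit-length $O(sk+\log n) = O(sk)$ (the $\log n$ is absorbed since any streaming algorithm reading a length-$\ge n$ stream uses $s = \Omega(\log n)$ space; in any case it does not affect the final space bound), and hence $H(\Pi^{(t)}\mid \cdots) = O(sk)$ on that event.

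The second step is to show that the number $R$ of iterations satisfies $\mathbb{E}[R] = O(1)$. In each iteration $S$ runs with fresh private randomness, so with probability at least $2/3$ its output satisfies $\|\tilde x - x\|_\infty \le \varepsilon\|x\|_p$. In the reduction, $x$ has all coordinates in $\{0,1\}$ except, in a YES instance, one coordinate $i^\ast$ equal to $l = ck$; hence $\|x\|_p \le ck + n^{1/p}$, so using $k = 4\varepsilon n^{1/p}$ we get $\varepsilon\|x\|_p \le \varepsilon c k + k/4$. Since $\varepsilon < 1/4$ and the hypothesis forces $k = 4\varepsilon n^{1/p}\ge 4/(e-2)$, a direct computation gives $\varepsilon c k + k/4 < ck - 1$. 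Therefore, on the good event: in a NO instance every $\tilde x_i \le 1 + \varepsilon\|x\|_p < ck$, so $I=\emptyset$ and the protocol terminates in Step~3; in a YES instance $\tilde x_{i^\ast}\ge x_{i^\ast}=ck$ by the no-underestimation property (which holds with probability $1$) while every other $\tilde x_i < ck$, so $I=\{i^\ast\}$ and the protocol terminates in Step~4. Thus an iteration fails to terminate only on its (probability $\le 1/3$) bad event, and since these events are independent across iterations, $R$ is stochastically dominated by a geometric random variable with success probability $2/3$, giving $\mathbb{E}[R]\le 3/2$. To make $\Pi_S$ a genuine finite protocol one truncates after $c'k$ iterations; this only decreases $H(\Pi_S)$ and adds failure probability at most $(1/3)^{c'k}=\exp(-\Theta(k))$, which for a suitable $c'$ is $\le e^{-ke/2}/(2e)$, so that Corollary~\ref{cor:main} will apply.

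Finally I would combine the two estimates via the chain rule for entropy. Letting $E_t$ be the event that the $t$-th iteration occurs, note that $E_t$ is a deterministic function of $\Pi^{(1)},\dots,\Pi^{(t-1)}$ (it depends only on whether iteration $t-1$ ended by writing ``start over''), and $\Pi^{(t)}$ is the empty string outside $E_t$. Consequently
\[
H(\Pi_S \mid D) \le \sum_{t\ge 1} H\bigl(\Pi^{(t)} \mid D,\Pi^{(1)},\dots,\Pi^{(t-1)}\bigr) \le \sum_{t\ge 1}\Pr[E_t]\cdot O(sk) = O(sk)\cdot\mathbb{E}[R] = O(sk),
\]
using $\sum_{t\ge 1}\Pr[E_t] = \sum_{t\ge 1}\Pr[R\ge t] = \mathbb{E}[R]$. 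Since $cCost_{\eta_0}(\Pi_S) \le H(\Pi_S\mid D)$, this proves the lemma.

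The main obstacle is avoiding the loss of a factor of $k$ in the round count: bounding the communication of $\Pi_S$ crudely by $(\#\text{iterations})\cdot O(sk)$ gives only $O(sk^2)$ in the worst case, which for $p=2$, $\varepsilon=\Theta(1)$ would yield a vacuous lower bound. The point is that entropy (hence information) is additive over iterations and the $t$-th iteration contributes only in proportion to $\Pr[E_t]$, which decays geometrically precisely because $S$ uses independent internal randomness on each re-execution — this is exactly where it matters that conditional information cost is measured in expectation. Everything else, namely the per-iteration transcript length and the per-iteration restart probability, is a routine consequence once the parameter choices $c = e/4$, $k = 4\varepsilon n^{1/p}$, and the range $1/((e-2)n^{1/p})\le \varepsilon < 1/4$ are in place.
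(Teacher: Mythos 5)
Your proof is correct, and it takes a genuinely different (and arguably cleaner) route than the paper's through the same bottleneck. The paper introduces the iteration count $R$ as an explicit auxiliary random variable, pays $H(R)=O(1)$ to condition on it (using part~3 of Proposition~\ref{prop:mut} together with Fact~\ref{fact:geometric} on the entropy of a geometric), and then expands $I(\Pi_S;X\mid D,R)$ as a sum over $r$ weighted by the geometric tail probabilities. You instead bound $cCost \le H(\Pi_S\mid D)$ directly, split the transcript by iteration via the entropy chain rule, and observe that the event $E_t$ that iteration $t$ is reached is a deterministic function of $\Pi^{(1)},\dots,\Pi^{(t-1)}$ — so there is simply no entropy to pay for the branching structure, and $H(\Pi^{(t)}\mid D,\Pi^{<t})\le \Pr[E_t]\cdot O(sk)$ falls out immediately, with $\sum_t\Pr[E_t]=\mathbb{E}[R]=O(1)$. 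This bypasses both Fact~\ref{fact:geometric} and part~3 of Proposition~\ref{prop:mut} entirely; what each approach buys is essentially the same, but yours is more self-contained. One small point where your bookkeeping is cruder: you bound the per-iteration transcript length by $O(sk+\log n)$ and absorb the $\log n$ by assuming $s=\Omega(\log n)$; the paper sidesteps this by noting that the posted index $i$ is determined by the preceding messages (and bounding $H(W\mid\cdots)\le 1$ by the problem's promise), so it never pays the $\log n$ at all. Your hedge is fine for the parameter range that matters, but it is worth noting the paper handles it without any assumption on $s$. Your remark about truncating after $O(k)$ iterations is also slightly misplaced for this particular lemma — the CIC bound holds for the untruncated protocol as you compute it, and truncation is only needed downstream in Theorem~\ref{thm:lower} to invoke Corollary~\ref{cor:main} — but it does no harm.
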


\begin{proof}
First note that regardless of whether we are in a YES instance or a NO instance 
of $\disj_{n, ck, k}$, by the triangle
inequality, 
\begin{eqnarray}\label{eqn:pNorm}
\|x\|_p \leq n^{1/p} + k = n^{1/p} + 4 \varepsilon n^{1/p} < 2n^{1/p}, 
\end{eqnarray}
where we have used $\varepsilon < 1/4$. 

Let $R \geq 1$ be an integer random variable indicating the total number of times Step 1 is executed. 
In order for Step 1 to be executed again after it has completed, we must have $|I| > 1$, where
$I = \{i : \tilde{x}_i \geq c k\}$. Let $i \neq j$ be two distinct elements in $I$. By the promise
of $\disj_{n, ck, k}$, regardless of whether the input is a NO instance or a YES
instance, there can be at most one index $\ell$ for which $x_{\ell} > 1$. This fact uses that
$c k > 1$, that is, $e \cdot \varepsilon n^{1/p} > 1,$ which holds since $\varepsilon > 1/(e n^{1/p}).$
Consequently, either
$x_i \leq 1$ or $x_j \leq 1$. Without loss of generality, suppose $x_i \leq 1$. Then
\begin{eqnarray}\label{eqn:badCoordinate}
\tilde{x}_i - x_i \geq c k - 1 = (e/4) 4 \varepsilon n^{1/p} - 1 > 2 \varepsilon n^{1/p},
\end{eqnarray}
where the last
inequality holds if $(e-2) \varepsilon n^{1/p} > 1$, that is, $\varepsilon > 1/((e-2) n^{1/p})$. 

Combining (\ref{eqn:pNorm}) and (\ref{eqn:badCoordinate}), 
it follows that $\tilde{x}_i - x_i > \varepsilon \|x\|_p$, which implies that
the streaming algorithm $S$ failed. In each independent 
execution of Step 1, we can assume that the stream
$T^1 \circ T^2 \circ \cdots \circ T^k$ created is the same, and thus the probability $p$ of success
of the streaming algorithm is the same, and $p \geq 2/3$. Thus, $R$ is a geometric random variable
with probability of success $p \geq 2/3$. By Fact \ref{fact:geometric}, $H(R) = O(1)$. 

By definition, $cCost_{\eta_0}(\Pi_S) = I(\Pi_S ; X | D)$, where $D$ is as defined
in distribution $\eta_0$, and we abuse notation and let
$\Pi_S$ denote the transcript of protocol $\Pi_S$. Here $X = (X_1, \ldots, X_k)$, where $X_j$
is the input to Player $j$. By part 3 of Proposition \ref{prop:mut}, 
\begin{eqnarray}\label{eqn:intermediate}
I(\Pi_S ; X | D) \leq I(\Pi_S ; X | D, R) + H(R) \leq I(\Pi_S ; X | D, R) + O(1).
\end{eqnarray}
By definition of conditional mutual information, 
\begin{eqnarray}\label{eqn:main}
I(\Pi_S ; X | D, R) = \sum_{r = 1}^{\infty} I(\Pi_S ; X | D, R = r) (1-p)^{r-1} p.
\end{eqnarray}
We bound each summand $I(\Pi_S ; X | D, R = r)$. Conditioned on $R = r$, the transcript
$\Pi_S$ is equal to $(S^1, \ldots, S^r, W)$, where $S^{\ell} = (S^{\ell, 1}, \ldots, S^{\ell, k})$
and $S^{\ell, j} = S(T^1 \circ T^2 \cdots \circ T^j)$ is the state of the streaming algorithm
$S$ posted to the blackboard by player $j$ in the $\ell$-th execution of Step 1. Here 
$W$ is either equal to the string ``empty", or $W = (X_{1, i}, \ldots, X_{k, i})$, where 
$I = \{i\}$ is the final setting of $I$ by $\Pi_S$. Note that we do not need to explicitly include
the ``start over" messages in the transcript, as they can be inferred given the condition $R = r$. 
We also do not need to include the index $i$ defining $W$ in the transcript, as this can be determined
from the other messages in the transcript. 

By the chain rule for conditional mutual information (part 4 of Proposition \ref{prop:mut}), 
$I(\Pi_S ; X | D, R = r)  =  \left [ \sum_{\ell = 1}^r I(S^{\ell} ; X | S^1, \ldots, S^{\ell-1}, D, R = r) \right ] + I(W ; X | S^1, \ldots, S^{r}, D, R = r)
 \leq  \left [\sum_{\ell = 1}^r H(S^{\ell}) \right ] + H(W | S^1, \ldots, S^{r}, D, R = r) 
 \leq  r \cdot s \cdot k + 1$,
where the first inequality uses part 2 of Proposition \ref{prop:mut}, while the second inequality
uses part 1 of Proposition \ref{prop:mut} together with the fact that conditioned on
$S^1, \ldots, S^{\ell}, D$, and $R = r$, this either fixes $W$ to be the string ``empty", or it fixes
$i$ and $W$ is deterministic given $X_{D, i}$. In either case, 
$H(W | S^1, \ldots, S^{r}, D, R = r) \leq 1$. Since $r s k \geq 1$, we have $I(\Pi_S ; X | D, R = r) \leq 2 r s k$. Plugging into
(\ref{eqn:main}), 
$I(\Pi_S ; X | D, R) \leq \sum_{r = 1}^{\infty} 2r s k \cdot (1/3)^{r-1} (2/3) = O(sk).$
Plugging this into (\ref{eqn:intermediate}), we get
$I(\Pi_S ; X | D) = O(sk) + O(1) = O(s k),$ which completes the proof. 
\end{proof}

\begin{theorem} (Streaming Lower Bound)\label{thm:lower} 
For any constant $p \geq 1$ and any $0 < \varepsilon < 1/4$, any $O(1)$-pass no-underestimation $\ell_p$-point query insertion-only algorithm uses $\Omega(\min(n, n^{1-1/p}/\varepsilon))$ space. 
\end{theorem}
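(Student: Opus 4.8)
The plan is to compose the pieces already assembled. Fix $c = e/4$ and, given the target accuracy $\varepsilon$, set $k = 4\varepsilon n^{1/p}$, rounded to an integer (this affects only hidden constants), so that from any $O(1)$-pass no-underestimation $\ell_p$-point query algorithm $S$ using space $s$ we may form the protocol $\Pi_S$ of Algorithm \ref{fig:1} for $\disj_{n, ck, k}$.

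I would first handle the range $1/((e-2) n^{1/p}) \le \varepsilon < 1/4$, which is precisely the overlap of the hypotheses of Lemma \ref{lem:correctness} and Lemma \ref{lem:cic}; in this range $k$ is an integer lying between a constant larger than $2$ and $n^{1/p} \le n$, so $\disj_{n, ck, k}$ is a legitimate instance (and, since each coordinate of the induced stream is at most $ck < n$, the promised $M < n$ holds), and Corollary \ref{cor:main} applies. By Lemma \ref{lem:correctness}, $\Pi_S$ solves $\disj_{n, ck, k}$ with probability $1$, so it is a $0$-error protocol and therefore $cCost_{\eta_0}(\Pi_S) \ge CIC_{\eta_0, 0}(\disj_{n, ck, k})$, which is $\Omega(n)$ by Corollary \ref{cor:main} with $\delta = 0$ (permitted, as the hypothesis $0 \le \delta \le e^{-ke/2}/(2e)$ then holds). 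On the other hand, Lemma \ref{lem:cic} gives $cCost_{\eta_0}(\Pi_S) = O(sk)$. Comparing the two bounds yields $sk = \Omega(n)$, i.e., $s = \Omega(n/k) = \Omega(n^{1-1/p}/\varepsilon)$; and since $\varepsilon \ge 1/((e-2)n^{1/p})$ forces $n^{1-1/p}/\varepsilon \le (e-2) n < n$, this is exactly $\Omega(\min(n, n^{1-1/p}/\varepsilon))$ in this range.

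For the remaining range $0 < \varepsilon < 1/((e-2)n^{1/p})$ I would argue by monotonicity in $\varepsilon$: every no-underestimation $\ell_p$-point query algorithm with parameter $\varepsilon$ is also such an algorithm with the larger parameter $\varepsilon' := 1/((e-2) n^{1/p})$, since enlarging the accuracy parameter only weakens property (1) while leaving the one-sided property (2) untouched. Once $n$ exceeds a constant depending only on $p$ we have $\varepsilon' < 1/4$, so the previous paragraph applies to $\varepsilon'$ and gives $s = \Omega(n^{1-1/p}/\varepsilon') = \Omega((e-2) n) = \Omega(n) \ge \Omega(\min(n, n^{1-1/p}/\varepsilon))$; for smaller $n$, $\min(n, n^{1-1/p}/\varepsilon) = O(1)$ and the statement is trivial. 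The two ranges together prove the theorem.

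The argument itself is short; the only thing requiring care is constant bookkeeping — verifying that $k = \Theta(\varepsilon n^{1/p})$ is simultaneously $\ge 2$ (so Corollary \ref{cor:main} is applicable), $\le n$ (so it is a valid number of players and the stream coordinates stay below $M < n$), and inside the exact intervals demanded by Lemmas \ref{lem:correctness} and \ref{lem:cic}, and that $\varepsilon' = \Theta(n^{-1/p})$ remains below $1/4$ in the small-$\varepsilon$ case. The one genuinely non-routine step has already been absorbed into Lemma \ref{lem:cic}: although $\Pi_S$ may restart $S$ geometrically many times, that number of restarts $R$ has constant mean, so $H(R) = O(1)$ and the $r$-th restart contributes to the conditional information cost only with probability $(1-p)^{r-1} p$, keeping the total at $O(sk)$ rather than $O(sk^2)$ — nothing further is needed here.
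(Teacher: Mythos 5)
Your proposal is correct and takes essentially the same route as the paper. The paper collapses your two cases into one by defining $\varepsilon' = \max(\varepsilon, 1/((e-2)n^{1/p}))$ and invoking the lemmas with parameter $\varepsilon'$ and $k = 4\varepsilon' n^{1/p}$ directly, then observing $n^{1-1/p}/\varepsilon' = \Omega(\min(n, n^{1-1/p}/\varepsilon))$; this is precisely your case split written as a single substitution, and your additional bookkeeping (integrality of $k$, the $k \ge 2$ and $ck < n$ checks, the small-$n$ degenerate case) is correct though left implicit in the paper.
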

\begin{proof}
Let $S$ be an $O(1)$-pass no-underestimation $\ell_p$-point query algorithm with space $s$. 
Let $\varepsilon' = \max(\varepsilon, 1/((e-2) n^{1/p}))$. 
Note that $S$ is also an $O(1)$-pass no-underestimation $\ell_p$-point query algorithm with space $s$
and parameter $\varepsilon'$. 
Thus, by Lemma \ref{lem:correctness}, 
there is a protocol $\Pi_S$ which solves $\disj_{n, ck, k}$ with probability $1$, 
where $c = e/4$ and $k = 4 \varepsilon n^{1/p}$. 
Consequently, and using $1/((e-2) n^{1/p}) \leq \varepsilon' < 1/4$, by Lemma \ref{lem:cic}, we have 
$cCost_{\eta_0}(\Pi_S) = O(s k).$ We apply Corollary \ref{cor:main} to
conclude $\Omega(n) = cCost_{\eta_0}(\Pi_S) = O(s k),$
and thus, $s = \Omega(n/k) = \Omega(n^{1-1/p}/\varepsilon') = \Omega(\min(n, n^{1-1/p}/\varepsilon))$.  
\end{proof}

\section{Optimal No-Underestimation Sketch}\label{sec:upper}
%

\begin{theorem}\label{thm:upper}
    Let $1 < p < \infty$ be fixed, and suppose there exists some fixed constant $c > 0$ such that $1 \le \varepsilon^{-1} \le n^{(1-c)/p}$ (equivalently, $\varepsilon^{-1} \le n^{1/p - \Omega(1)}$). Then, there exists a randomized sketch of dimension $O(\varepsilon^{-1} \cdot n^{1-1/p})$ on any nonnegative vector $x \in \BR_{\ge 0}^{n}$ that never underestimates any $x_i$, but with probability at least $1-1/n$ does not overestimate any $x_i$ by more than $\varepsilon \cdot \|x\|_p$.
\end{theorem}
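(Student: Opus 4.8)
The plan is to run a Count--Min sketch but with only $t=O(1)$ hash tables instead of the usual $\Theta(\log n)$. Fix a large enough constant $C$, set $m = C\,\varepsilon^{-1} n^{1-1/p}$, pick $t$ mutually independent fully random hash functions $h_1,\dots,h_t\colon[n]\to[m]$, and maintain $C_\ell[b] = \sum_{i:h_\ell(i)=b} x_i$ for $\ell\in[t]$, $b\in[m]$; the estimate is $\tilde x_i = \min_{\ell\in[t]} C_\ell[h_\ell(i)]$, and the dimension is $tm = O(\varepsilon^{-1}n^{1-1/p})$. Since $x\ge 0$, every bucket sum is at least $x_i$, so $\tilde x_i\ge x_i$ holds deterministically; this is the no-underestimation property, for free. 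Writing $E_{\ell,i} = \sum_{j\ne i:\,h_\ell(j)=h_\ell(i)} x_j \ge 0$, the overestimate is $\tilde x_i - x_i = \min_\ell E_{\ell,i}$, so it remains to show that with probability $\ge 1-1/n$ we have $\min_\ell E_{\ell,i} \le \varepsilon\|x\|_p$ for all $i$ simultaneously.

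The analysis fixes an arbitrary $i$ and, because the tables are independent, factors $\Pr[\forall\ell\ E_{\ell,i}>\varepsilon\|x\|_p] = \prod_{\ell}\Pr[E_{\ell,i}>\varepsilon\|x\|_p]$. It then suffices to bound a single factor by $n^{-\gamma'}$ for a constant $\gamma'=\gamma'(c,p)>0$: choosing $t=\lceil 3/\gamma'\rceil = O(1)$ makes the product at most $n^{-3}$, and a union bound over $i\in[n]$ gives total failure probability at most $n^{-2}\le 1/n$. To bound one factor, split the coordinates into heavy ones, $H=\{j:x_j\ge \varepsilon\|x\|_p/K\}$ with $K=\Theta(\log n)$ (constant chosen later), and light ones. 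A counting argument gives $|H|\le (K/\varepsilon)^p$, hence $|H|/m\le (K^p/C)\,\varepsilon^{1-p}n^{1/p-1}$; the hypothesis $\varepsilon^{-1}\le n^{(1-c)/p}$ gives $\varepsilon^{1-p}n^{1/p-1}\le n^{-c(p-1)/p}$, so $|H|/m\le \mathrm{polylog}(n)\cdot n^{-c(p-1)/p}\le n^{-\gamma'}$ for $n$ large, and a heavy coordinate collides with $i$ in a given table with probability at most $n^{-\gamma'}$. Conditioned on no heavy collision, $E_{\ell,i}$ equals its light part $\sum_{j\ \text{light},\,j\ne i} x_j\,\mathbf{1}[h_\ell(j)=h_\ell(i)]$, a sum of independent terms each at most $\varepsilon\|x\|_p/K$, with mean at most $\|x\|_1/m \le n^{1-1/p}\|x\|_p/m = \varepsilon\|x\|_p/C \le \tfrac{1}{2}\varepsilon\|x\|_p$ (Hölder) and variance at most $m^{-1}\sum_{j\ \text{light}} x_j^2$, which one bounds using the light threshold: $\sum_{j\ \text{light}} x_j^2 \le \varepsilon^{2-p}\|x\|_p^2$ for $p\le 2$, and $\sum_{j\ \text{light}} x_j^2 \le n^{1-2/p}\|x\|_p^2$ for $p>2$. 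Feeding these into Bernstein's inequality bounds the probability that the light part exceeds $\varepsilon\|x\|_p$ by $\exp(-\Omega(K)) + \exp(-n^{\Omega(1)})$ (the first term from the maximum-term part, the second from the variance part, using $\varepsilon^{-1}\le n^{1/p-\Omega(1)}$ again), and choosing the hidden constant in $K=\Theta(\log n)$ large enough makes this at most $n^{-\gamma'}$. Combining the two contributions gives $\Pr[E_{\ell,i}>\varepsilon\|x\|_p]\le 2n^{-\gamma'}$, and after shrinking $\gamma'$ slightly the argument is complete; the dimension bound is by construction.

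The delicate point, and the one I expect to be the real obstacle, is calibrating the heaviness threshold $\varepsilon\|x\|_p/K$. With the naive choice $K=\Theta(1)$, a single light coordinate colliding with $i$ can already exhaust the entire error budget $\varepsilon\|x\|_p$, and in Bernstein's bound the linear (maximum-term) contribution $b\cdot(\text{deviation})$ then dominates the variance contribution, yielding only a constant per-table failure probability $e^{-\Theta(1)}$, which $O(1)$ tables cannot amplify to $1/n$. Taking $K=\Theta(\log n)$ pushes that linear term down to $e^{-\Theta(\log n)} = n^{-\Omega(1)}$ at the cost of enlarging $H$ by a $\mathrm{polylog}(n)$ factor; the whole reason this is affordable is that the hypothesis $\varepsilon^{-1}\le n^{1/p-\Omega(1)}$ leaves a genuine polynomial slack $n^{c(p-1)/p}$ in the bound for $|H|/m$, which comfortably absorbs a polylogarithmic factor while keeping $|H|/m$ polynomially small. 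The remaining work --- verifying that the variance contribution to Bernstein's bound is also polynomially small, handled separately in the $p\le 2$ and $p>2$ regimes using the same slack, and noting that fully random hashing can be replaced by an $O(\log n)$-wise independent family (with a standard Bernstein-type bound under bounded independence) or otherwise derandomized without affecting the dimension --- is routine but must be carried out carefully.
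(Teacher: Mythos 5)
Your proposal is correct and takes essentially the same approach as the paper: Count--Min with $O(1)$ tables of size $\Theta(\varepsilon^{-1}n^{1-1/p})$, a head/tail split with $\Theta((\varepsilon^{-1}\log n)^p)$ heavy coordinates whose collision probability is bounded by $|H|/m$, and a Chernoff-style concentration bound on the light part, using the slack from $\varepsilon^{-1}\le n^{1/p-\Omega(1)}$ to make the per-table failure polynomially small. The only cosmetic difference is that the paper normalizes so the tail entries are at most $1$ and computes the MGF directly, whereas you leave $x$ unnormalized and invoke Bernstein; the two computations are equivalent.
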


\begin{proof}
    We analyze Count-Min sketch with $t=O( 1/(1-1/p))$ hash tables of  size $k= 4 \varepsilon^{-1} \cdot n^{1-1/p}$. The sketch length will be $O(kt) = O(\eps^{-1} \cdot n^{1-1/p})$ for a fixed $p$.
    Let  $r = (2 \varepsilon^{-1} \log n)^p$. We define $\text{head}_r(x)$ to be the set of the $r$ largest (in magnitude) coordinates in $x$. We also define $\text{tail}_r(x) = U \backslash \text{head}_r(x)$. Finally, we define 
    $\|x\|_{p, \text{tail}(r)} = \|x_{\text{tail}_r(x)}\|_p = \left(\sum\limits_{i \in \text{tail}_r(x)} x_i^p\right)^{1/p}.$ 
    For simplicity, we suppose that $\|x\|_{\infty, \text{tail}(r)} = 1$, i.e., we have normalized $x$ so that the $(r+1)$th largest element of $x$ is $1$, unless there are at most $r$ nonzero coordinates of $x$, in which case $\|x\|_{\infty, \text{tail}(r)} = 0$. 
    
 Consider any coordinate $j$. We will analyze the estimation error of that coordinate. First, we observe that, with probability at least $1-r/k$, none of the coordinates in $\text{head}_r(x) -\{i\}$ is hashed to the same bucket as $i$. Conditioning on this event, if there are at most $r$ nonzero coordinates in $x$, then there will be no estimation error. Otherwise, the estimation error of that coordinate is bounded from above by a random variable $X = X_1+\dots+X_n,$ where if $i \in \text{tail}_r(x)$, $X_i$ equals $x_i$ with probability $1/k$ and $0$ otherwise, and if $i \in \text{head}_r(x),$ $X_i = 0$. Then,
$$\BP(X > \varepsilon \cdot \|x\|_p) \le \frac{\BE[e^X]}{e^{\varepsilon \cdot \|x\|_p}} = e^{-\varepsilon \cdot \|x\|_p} \cdot \prod_{i \in \text{tail}_r(x)} \left(1 + \frac{e^{x_i}-1}{k}\right) \le \exp\left(\frac{2}{k} \cdot \|x\|_{1, tail(r)} - \varepsilon \cdot \|x\|_p\right).$$

    Now, we know that $\|x\|_{1, \text{tail}(r)} \le \|x\|_1 \le n^{1-1/p} \cdot \|x\|_p.$ Therefore, $\frac{2}{k} \cdot \|x\|_{1, \text{tail}(r)} \le \frac{\varepsilon}{2} \cdot \|x\|_p,$ so $\BP(X > \varepsilon \cdot \|x\|_p) \le \exp\left(-\varepsilon \|x\|_p/2\right).$ But $\|x\|_p \ge \|x\|_{p, \text{head}(r)} \ge r^{1/p} = 2 \varepsilon^{-1} \log n$ by our normalization of $x$, so $\BP(X > \varepsilon \|x\|_p) \le 1/n$. 
    
    Thus, we have shown the estimation error of a fixed coordinate $j$ exceeds $\varepsilon \|x\|_p$ with probability at most $r/k+1/n = O(\varepsilon^{1-p} (\log n)^p/n^{1-1/p})$. Since $c, p$ are fixed and $\varepsilon \ge n^{-(1-c)/p}$, $\varepsilon^{1-p} \log(n)^p/n^{1-1/p} \le n^{-c \cdot (p-1)/p} \cdot (\log n)^p = n^{-\Omega((p-1)/p)}$. Thus, by using $t=O(p/(p-1)) = O( 1/(1-1/p))$ hash tables, we get that at least one hash table will return an estimate with error at most $\varepsilon \|x\|_p$ with probability at least $1-1/n^2$. By the union bound it follows that all coordinates have error bounded by $\varepsilon \|x\|_p$ with probability at least $1-1/n$.
\end{proof}

\begin{Remark}
    This sketching algorithm implies an $O(\eps^{-1} \cdot n^{1-1/p} \cdot \log n)$-bit streaming algorithm for strict turnstile polynomial-length streams. Although we assume full independence of the hash tables, which requires storing $\Omega(n)$ bits of randomness, one can use the pseudorandom generator of Nisan and Zuckerman \cite{nisanzuckerman1996} to produce a streaming algorithm using space only $O(\log n)$ times the sketch length, since the sketch length, which is $O(k \cdot t)$, is polynomially related to $n$.
\end{Remark}

\begin{Remark}
    In \cite{jw19} upper bounds in the message-passing multiparty communication were studied. Here each of $m$ players holds a nonnegative vector $x^i \in \{0, 1, 2, \ldots, M\}^n$ for some $M = \textrm{poly}(n)$, the players share a common random string, and their goal is to compute a function of their joint inputs by communicating a small number of bits. The players correspond to the nodes of a graph of bounded diameter and communicate along the edges. For the $\ell_p$-point query problem with no underestimation, applied to the vector $\sum_{i=1}^m x^i$, we can improve the $O(\log n)$ bits required to store each coordinate of the sketch of Theorem \ref{thm:upper}, by having each player round each coordinate of its sketched vector up to the nearest power of $(1+\epsilon)$. Thus, when communicating the sketch to another player, it needs only $O(\log \log n + \log(1/\epsilon))$ bits per sketching dimension. Further, the no-underestimation property holds, since counters have only been rounded up, and each counter will be at most $(1+O(\epsilon))$ times its actual value after merging the sketches of all players. The latter property follows since sketches need only be merged $O(1)$ times, using the fact that the diameter is bounded. We note that this is optimal up to an $O(\log \log n + \log(1/\epsilon))$ factor, since the lower bound of Theorem \ref{thm:lower} holds already in the blackboard communication model, where each player posts its message and can be seen by all other players, which is a stronger model than the one in \cite{jw19}. We refer the reader to \cite{jw19} for further details of the model.
\end{Remark}

\section{No-Overestimation}
In this section we give matching sketching lower and upper bounds for no-overestimation algorithms. 

\subsection{No-overestimation lower bound}
Our main technical result is the following theorem.

\begin{theorem} \label{mainNoOver}
    Fix $1 \le T \le n/2.$ Let $v^{(i)}$ be the vector with $i$th coordinate $1$ and $j$th coordinate $\frac{1}{T}$ for all $j \neq i$. Then, there exists an absolute constant $c > 0$ such that for any $k \le c \cdot n/T$ and real-valued matrix $A \in \BR^{k \times n}$, there exists a subset $S \subset [n]$ of size $n/2$ such that for any $i \in S$, there exists $x \in \BR^n$ such that $Ax = Av^{(i)},$ $x$ has only nonnegative entries, and $x_i = 0.$
\end{theorem}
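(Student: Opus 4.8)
I would adapt the argument used for Theorem~\ref{mainNoUnder}. For a fixed index $i$, the existence of a nonnegative $x$ with $Ax = Av^{(i)}$ and $x_i = 0$ is exactly the statement that the linear program $\min\{x_i : Ax = Av^{(i)},\ x \ge 0\}$ has optimal value $0$ (the value is always $\ge 0$ since $x \ge 0$). So it suffices to show that this LP has strictly positive value for at most $n/2$ of the indices $i$; then any size-$n/2$ subset $S$ of the remaining indices works. The one new feature compared with Theorem~\ref{mainNoUnder} is that the dual certificates obtained for the ``bad'' indices $i$ will have to be packed into a single square matrix whose rank is still bounded by $k = rk(A)$, so that Lemma~\ref{RankLB2} applies; this is also where the loss of a factor of $2$ (hence the hypothesis $T \le n/2$) comes from.

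\textbf{The dual LP.} Fix $i$. The LP $\min\{e_i^T x : Ax = Av^{(i)},\ x\ge 0\}$ is feasible ($x = v^{(i)} \ge 0$, using $T \ge 1$) and bounded below, so by LP duality its value is attained and equals $\max\{(Av^{(i)})^T y : A^T y \le e_i\}$, with the dual optimum also attained. Substituting $w := A^T y$, the dual variable $w$ ranges over $RowSpan(A) \cap \{w \in \BR^n : w_j \le 0 \text{ for } j \neq i,\ w_i \le 1\}$, and the dual objective equals $\langle v^{(i)}, w\rangle = w_i + \tfrac{1}{T}\sum_{j\neq i} w_j$.

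\textbf{Rank obstruction.} Suppose for contradiction that $B := \{i : \text{the LP value is} > 0\}$ has $|B| > n/2$. For each $i \in B$, fix a dual-feasible $w$ with $\langle v^{(i)}, w \rangle > 0$. Since every $w_j$ with $j \neq i$ is $\le 0$, positivity of $\langle v^{(i)}, w\rangle$ forces $w_i > 0$, and $w_i \le 1$ by feasibility, so replacing $y$ by $y/w_i$ (a positive rescaling, which keeps $w_j \le 0$ for $j\neq i$, sets $w_i = 1$, and keeps $\langle v^{(i)}, w\rangle > 0$) we may assume $w_i = 1$. Then $1 + \tfrac1T\sum_{j\neq i} w_j > 0$ gives $\sum_{j\neq i}|w_j| < T$, hence $\|w\|_1 < 1 + T \le 2T$. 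Call this row vector $z^{(i)}$: it lies in $RowSpan(A)$, has $z^{(i)}_i = 1$, and $\|z^{(i)}\|_1 \le 2T$. Now form $M \in \BR^{|B|\times|B|}$ with $M_{i,j} = z^{(i)}_j$ for $i, j \in B$. Each diagonal entry satisfies $|M_{ii}| \ge 1$, each row has $\ell_1$-norm at most $\|z^{(i)}\|_1 \le 2T$, and $rk(M) \le rk(A) \le k$ because all rows of $M$ lie in $RowSpan(A)$. Lemma~\ref{RankLB2} (with $2T$ in place of $T$ and $|B|$ in place of $n$) then gives $k \ge rk(M) = \Omega(|B|/T) = \Omega(n/T)$.

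\textbf{Conclusion.} This contradicts $k \le c\cdot n/T$ once $c$ is a small enough absolute constant (the hypothesis $T \le n/2$ keeps the statement non-vacuous, and the degenerate regime $T = \Theta(n)$, where $k \le c\cdot n/T$ forces $k=0$ and hence $A = 0$, makes the conclusion immediate with $x = 0$). Hence $|B| \le n/2$; taking any $S \subseteq [n]\setminus B$ with $|S| = n/2$, for every $i \in S$ the LP has value $0$, attained by some $x \ge 0$ with $Ax = Av^{(i)}$ and $x_i = 0$. The main obstacle I expect is purely in getting the bookkeeping right: fixing the direction of the dual inequality ($A^T y \le e_i$ rather than an equality on coordinate $i$) and verifying the strict inequality $w_i > 0$ that legitimizes the rescaling; once those are in place, everything is driven by Lemma~\ref{RankLB2} and constant-chasing.
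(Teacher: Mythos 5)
Your proof is correct, and it is a genuinely cleaner route to the same destination. Both arguments ultimately reduce to producing, for each ``bad'' index $i$, a dual certificate $z^{(i)} \in RowSpan(A)$ with $z^{(i)}_i = 1$ and $\|z^{(i)}\|_1 \le 1 + T \le 2T$, stacking these into a matrix, and invoking Lemma~\ref{RankLB2}. The paper, however, reaches this in two stages: it first introduces a kernel-basis matrix $B$ with $\ker A = \mathrm{im}(B)$, reformulates the primal as a feasibility LP in $y$ with constraint matrix $B_{-i}$, dualizes that, and obtains a vector $m^{(i)}$ satisfying $B^T m^{(i)} = 0$; it proves only the ``weak'' version (some single good $i$ exists) this way, and then bootstraps to $|S| = n/2$ by a separate contradiction argument that restricts $A$ to the columns $S^c$ and re-applies the weak version. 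You skip both the kernel basis and the bootstrap: you dualize $\min\{x_i : Ax = Av^{(i)},\ x\ge 0\}$ directly via the substitution $w = A^T y$, correctly observe that positivity of the dual value forces $w_i > 0$ so the rescaling to $w_i = 1$ is legitimate, and you pack all $|B| > n/2$ certificates (restricted to coordinates in $B$) into one $|B|\times|B|$ matrix, applying Lemma~\ref{RankLB2} with $|B|$ in place of $n$ and $2T$ in place of $T$. Both proofs lose the same factor of two in $\|z^{(i)}\|_1 \le 2T$, and both need the hypothesis $T \le n/2$ for the same reason (to keep $|B|$, respectively $|S^c|$, comparable to $T$). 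One small wording nit: the rows of your $M$ live in $\BR^{|B|}$, not $\BR^n$, so they lie in the coordinate projection of $RowSpan(A)$ onto $B$ rather than in $RowSpan(A)$ itself --- but projection can only drop dimension, so $rk(M) \le rk(A) \le k$ is still correct.
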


\begin{proof}
    We first prove a weaker version of theorem where we just prove there exists a subset $S \subset [n]$ of size $1$, i.e., there exists some $i \in [n]$ and $x \in \BR_{\ge 0}^n$ such that $Ax = Av^{(i)}$ with $x_i = 0$. However, we relax the assumption $T \le n/2$ to $T \le n$.

    Fix some $1 \le T \le n$. Now, for any fixed matrix $A \in \BR^{k \times n}$ and fixed $1 \le i \le n,$ consider the linear program $\min x_i: Ax = Av^{(i)}, x \ge 0$. The optimal objective being $0$ is equivalent to there existing $x$ such that $A(x - v^{(i)}) = 0$, $x \ge 0,$ and $x_i = 0$. Now, let $B$ be an $n \times (n-k)$ matrix such that the kernel of $A$ equals the image of $B$. Then, this is equivalent to saying there exists $y \in \BR^{n-k}$ such that $By$ can be written as $x- v^{(i)},$ or equivalently, $(By)_i = -1$ and $(By)_j \ge -\frac{1}{T}$. This is equivalent to there being $y \in \BR^{n-k}$ such that $(By)_i \le -1$ and $(By)_j \ge -\frac{1}{T}$ for all $j \neq i$, since if there existed $y$ such that $(By)_i < -1$ and $(By)_j \ge -\frac{1}{T}$, we could scale $y$ by a factor less than $1$ so that $(By)_i = -1,$ and we would still have $(By)_j \ge -\frac{1}{T}$. Finally, if we let $B_{-i}$ be the matrix where the $i$th row is negated, these conditions are equivalent to $(B_{-i} y)_i \ge 1$ and $(B_{-i} y)_j \ge -\frac{1}{T}$.
    
    To summarize, we define $w^{(i)}$ as the vector with $i$th coordinate $1$ and $j$th coordinate $-\frac{1}{T}$ for all $j \neq i$. Then, $\min x_i: Ax = Av^{(i)}, x \ge 0$ having objective $0$ is equivalent to $\min 0 \cdot y: B_{-i} y \ge w^{(i)}$ having objective $0$ (as opposed to $\infty$). The dual linear program of this is $\max w^{(i)} \cdot z: (B_{-i})^T z = 0, z \ge 0$. Therefore, if there does not exist $x$ such that $Ax = Av^{(i)}, x \ge 0,$ and $x_i = 0,$ then there exists $z^{(i)} \ge 0$ such that $(B_{-i})^T z^{(i)} = 0$ and $w^{(i)} \cdot z^{(i)} > 0$. Then, if we let $m^{(i)}$ be the vector which is the same as $z^{(i)}$ but with the $i$th entry negated, then $m^{(i)}_i \le 0,$ $m^{(i)}_j \ge 0$ for all $j \neq i,$ $-m^{(i)}_i > \frac{1}{T} \cdot \sum_{j \neq i} m^{(i)}_j$, and perhaps most importantly, $B^T m^{(i)} = 0$. We can thus scale $m^{(i)}$ so that $m^{(i)}_i = -1$, but then $\sum_j |m^{(i)}_j| \le 1 + T$. Then, if we let $M$ be a matrix with $i$th row $m^{(i)},$ we have that $B^T M = 0$ so $M$ has rank at most $k = rk(A)$, but by Lemma \ref{RankLB2}, the rank of $M$ is $\Omega(n/T).$ Thus, $k = \Omega(n/T)$, or else some $v^{(i)}$ will have its $i$th coordinate estimated as $0$.   
    
    We now prove the full version of the theorem, where we show there is a set $S$ of size $n/2$. Suppose the contrary, and suppose the maximal such set $S$ has size less than $n/2$. In this case, the set $S^c = [n] \backslash S$ has size at least $n/2$, and for every $i \in S^c$ and all $x \in \BR_{\ge 0}^{n}$ with $x_i = 0$, $Ax \neq Av^{(i)}$. Now, let $A_S \in \BR^{|S| \times n}$ be the matrix corresponding to the columns of $A$ in $S$, and let $A_{S^c} \subset \BR^{|S^c| \times n}$ be the matrix corresponding to the columns of $A$ in $S^c$. In addition, for any $i \in S^c,$ we define $v_{S^c}^{(i)}$ to be the $|S^c|$-dimensional vector, indexed by entries in $S^c$, where the entry corresponding to $i$ is a $1$ and all other entries are $1/T$. Now, since $S^c \ge n/2,$ we can apply the weaker version of the theorem so say that there exists $i \in S^c$ and $x' \in \BR^{S^c}$, indexed by elements in $S^c$, such that $A_{S^c}x' = A_{S^c}v_{S^c}^{(i)},$ $x'$ only has nonnegative entries, and $x'_i = 0$. We can apply the theorem since $T \le n/2 \le |S^c|,$ and since $k \le c/2 \cdot n/T$ implies $k \le c \cdot |S^c|/T$ (we may replace $c$ with $c/2$). Therefore, by adding back the remaining entries of $v^{(i)}$ corresponding to entries of $S$ (which are all $1/T$), and making $x \in \BR^n$ such that $x_i = 1/T$ for $i \in S$ and $x_i = x'_i$ for $i \in S^c,$ we have that 
$$Ax = A_{S^c} x' + \frac{1}{T} \sum_{i \in S} A_i = A_{S^c}v_{S^c}^{(i)} + \frac{1}{T} \sum_{i \in S} A_i = A v^{(i)}.$$
    This contradicts the fact that $S$ is maximal, since we can add $i \in S^c$ to it. This proves the full version of the theorem.
\end{proof}

Theorem \ref{mainNoOver} allows us to prove our desired sketching lower bound, which we now state and prove.

\begin{corollary}
\label{cor:noOver}
    A sketching algorithm that returns $\hat{x}$ such that $\hat{x}_i \le x_i$ deterministically but $\hat{x}_i \ge x_i - \varepsilon \cdot ||x||_p$ for all $i \in U$ holds with probability at least $1/2$ must use at least $\Omega(\min(n, \varepsilon^{-1} \cdot n^{1 - 1/p}))$ rows.
\end{corollary}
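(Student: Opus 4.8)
The plan is to derive Corollary~\ref{cor:noOver} from Theorem~\ref{mainNoOver} by exactly the distributional (Yao) reduction that turned Theorem~\ref{mainNoUnder} into Corollary~\ref{cor:noUnder}. First I would pass from a randomized sketch to a fixed one. The no-overestimation guarantee $\hat x_i \le x_i$ holds for \emph{every} setting of the internal randomness, hence it survives conditioning on that randomness; meanwhile the $\ell_p$ error bound holds with probability at least $1/2$. So by Yao's minimax principle applied to the distribution $\mathcal{D}$ that samples $i \in [n]$ uniformly and outputs the vector $v^{(i)}$ of Theorem~\ref{mainNoOver}, there is a single matrix $A \in \BR^{k \times n}$ together with a decoding rule such that (a) the returned estimate never overestimates any coordinate, and (b) with probability at least $1/2$ over $i \sim [n]$ it satisfies $|\hat v^{(i)}_j - v^{(i)}_j| \le \varepsilon \|v^{(i)}\|_p$ for all $j$, in particular for $j = i$.

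Next I would set the scale $T := \max\{1,\, 2\varepsilon n^{1/p}\}$, so that $n/T = \min\{n,\, n^{1-1/p}/(2\varepsilon)\}$ and $T \le n/2$ in the non-degenerate regime (this uses $\varepsilon < 1/2$ and $p > 1$, and may require $n$ to exceed a constant depending on $p$; for smaller $n$ the claimed bound is trivial). Assume for contradiction that $k \le c\, n/T$ with $c$ the constant from Theorem~\ref{mainNoOver}. The theorem produces a set $S \subseteq [n]$ with $|S| \ge n/2$ such that for each $i \in S$ there is a nonnegative $x$ with $Ax = Av^{(i)}$ and $x_i = 0$. Since the decoding sees only $Av^{(i)} = Ax$ and never overestimates, $\hat v^{(i)}_i \le x_i = 0$. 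On the other hand, because $T \ge 2\varepsilon n^{1/p}$ we get the uniform bound $\|v^{(i)}\|_p^p = 1 + (n-1)/T^p < 1 + (2\varepsilon)^{-p}$, hence $\varepsilon \|v^{(i)}\|_p \le \varepsilon (1 + (2\varepsilon)^{-p})^{1/p} \le \varepsilon + 1/2 < 1 = v^{(i)}_i$, so $\hat v^{(i)}_i = 0 < v^{(i)}_i - \varepsilon\|v^{(i)}\|_p$ and the $\ell_p$ guarantee \emph{fails} for every $i \in S$. Thus the decoding errs on a $\ge 1/2$ fraction of $\mathcal{D}$, contradicting (b), so $k = \Omega(n/T) = \Omega(\min\{n,\, \varepsilon^{-1} n^{1-1/p}\})$.

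The only delicate point, and the one place I would argue with extra care, is that ``errs with probability $\ge 1/2$'' formally contradicts ``succeeds with probability $\ge 1/2$'' only when the failure set is strictly larger than $n/2$. This is cheap to arrange: the maximality-of-$S$ argument inside the proof of Theorem~\ref{mainNoOver} in fact gives $|S| \ge (1-\gamma)n$ for any fixed constant $\gamma > 0$, provided the hypothesis is strengthened to $k \le c_\gamma\, n/T$ — just apply the size-one version of the theorem to the $|S^c| \ge \gamma n \ge T$ columns of $A$ indexed by $S^c$, precisely as in the stated proof. Taking $\gamma = 1/4$ costs only a constant factor in the dimension lower bound and makes the contradiction with the $1/2$ success probability strict. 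Everything else is routine bookkeeping; the substantive content lives entirely in Theorem~\ref{mainNoOver} and the rank bound of Lemma~\ref{RankLB2}.
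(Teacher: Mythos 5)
Your proposal is correct and follows essentially the same route as the paper: reduce to a fixed sketch matrix by averaging over the internal randomness, feed it the vectors $v^{(i)}$ with $T=\max(1,2\varepsilon n^{1/p})$, invoke Theorem~\ref{mainNoOver} to force $\hat v^{(i)}_i\le 0$ for every $i\in S$, and observe $\varepsilon\|v^{(i)}\|_p<1=v^{(i)}_i$ to contradict the success guarantee, yielding $k=\Omega(n/T)=\Omega(\min(n,\varepsilon^{-1}n^{1-1/p}))$. Your additional step of boosting $|S|$ to $(1-\gamma)n$ so that the ``failure probability $\ge 1/2$ versus success probability $\ge 1/2$'' tension becomes a strict contradiction is a minor refinement the paper silently elides, and it does not change the substance of the argument.
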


\begin{proof}
    Suppose $A$ is a sketch matrix, let $T = \max(1, 2 \varepsilon \cdot n^{1/p})$, and suppose the stream has final vector $v^{(i)} = (\frac{1}{T}, \cdots, \frac{1}{T}, 1, \frac{1}{T}, \cdots, \frac{1}{T})$ for a uniformly randomly chosen $i$. The sketch is $Av^{(i)}$, and with probability at least $1/2$ for any fixed $A$ (if $i \in S$ for $S$ as defined in Theorem \ref{mainNoOver}),
    the output $\hat{v}$ must satisfy $\hat{v}_i = 0$ if $k \le c \cdot n/T$, since $Av = Ax$ for some $x$ with $x_i = 0$ and we deterministically cannot overestimate. Also, note that $||v^{(i)}||_p \le 1 + n^{1/p} \cdot \frac{1}{T} \le 1 + \varepsilon^{-1}/2 < \varepsilon^{-1}$, so $\varepsilon \cdot ||v^{(i)}||_p < 1.$ So, for any $i \in S$, if the stream ends with $v^{(i)}$, we do not satisfy the point query lower bound, and therefore $\hat{x}_i < x_i + \varepsilon \cdot ||x||_p$, unless we use $\Omega(n/T) = \Omega(\max(n, \varepsilon^{-1} \cdot n^{1-1/p}))$ rows. This assumes that $A$ is a fixed sketch matrix, but even if $A$ is randomized, the claim still holds because $i$ is chosen randomly, so we still have $i \in S$ with at least $1/2$ probability where $S$ is a set of size at least $n/2$ that may depend on $A$.
\end{proof}

\subsection{Near-optimal sketch that does not overestimate}

In this section we provide a sketching method of dimension almost matching the lower bound in Corollary~\ref{cor:noOver}, up to a factor of $\log n \cdot \log \varepsilon^{-1}$. Specifically, we show the following theorem.

\begin{theorem} \label{thm:alg_no_overestimation}
    Let $1 \le p < \infty$. Then, there exists a randomized sketch of dimension $O(p \cdot \varepsilon^{-1} \log \varepsilon^{-1} \cdot n^{1-1/p} \log n)$ on any nonnegative vector $x \in \BR_{\ge 0}^{n}$ that never overestimates any $x_i$, but with probability at least $9/10$ does not underestimate any $x_i$ by more than $\varepsilon \cdot \|x\|_p$.
\end{theorem}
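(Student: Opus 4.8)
\textbf{Proof proposal for Theorem~\ref{thm:alg_no_overestimation}.}

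The plan is to follow the high-level strategy sketched in the overview: use a Count-Min-style hashing to partition $[n]$ into buckets, track the total mass of each bucket exactly, and inside each bucket run a deterministic $\ell_1$-point query algorithm (e.g.\ the sketch of \cite{nelson2014deterministic} with a small \emph{constant} accuracy parameter, say $0.1$) modified so that it never overestimates. Concretely, fix a hash function $h : [n] \to [m]$ with $m = \Theta(\varepsilon^{-1} n^{1-1/p})$ buckets. For each bucket $B$ we know $\|x^B\|_1$ exactly (it is a single linear measurement), and the deterministic point query structure on $x^B$ lets us recover, for every $i$ with $h(i) = B$, a value $\hat{x}_i$ with $x_i - 0.1\|x^B\|_1 \le \hat{x}_i \le x_i$; the upper bound holds always since the deterministic structure can be shifted down to guarantee no overestimation. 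The deterministic sub-sketch on each bucket costs $O(\log n)$ dimensions (accuracy $0.1$ gives $O(1)$ "cells" each an $\ell_1$-guarantee sketch of length $O(\log n)$, or more simply $O(\mathrm{polylog})$), so one copy of the whole structure uses $O(m \cdot \log n) = O(\varepsilon^{-1} n^{1-1/p}\log n)$ dimensions.

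The key point is then to argue that for a fixed coordinate $i$, with good probability the bucket $h(i)$ has small $\ell_1$ mass, specifically $\|x^{h(i)}\|_1 \le O(\varepsilon)\|x\|_p$ — because then the additive error $0.1\|x^{h(i)}\|_1$ is at most $\varepsilon \|x\|_p$ up to constants. To see why a random bucket is light: separate out the "heavy" coordinates, namely the top $r = \Theta(\varepsilon^{-p}\,\mathrm{polylog})$ coordinates of $x$ in magnitude; with probability $1 - r/m = 1 - n^{-\Omega(1)}$ (using $\varepsilon^{-1} \le n^{1/p - \Omega(1)}$, analogous to Theorem~\ref{thm:upper}) none of them lands in $h(i)$, and the contribution of the tail to $\|x^{h(i)}\|_1$ is, in expectation, $\|x\|_{1,\mathrm{tail}(r)}/m \le \|x\|_1/m \le n^{1-1/p}\|x\|_p / m = \varepsilon\|x\|_p$. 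A Markov (or Chernoff, as in the no-underestimation proof) bound then gives $\|x^{h(i)}\|_1 = O(\varepsilon\|x\|_p)$ for this fixed $i$ with constant probability. To boost this from constant probability to $1 - 1/(10n)$ per coordinate, run $L = O(\log n)$ independent copies of the whole hashing-plus-deterministic-sub-sketch construction and, for coordinate $i$, output the \emph{maximum} of the $L$ estimates $\hat{x}_i^{(1)},\dots,\hat{x}_i^{(L)}$. Taking the max preserves the no-overestimation property (each copy never overestimates, so neither does the max), and it suffices that a single copy has a light bucket for $i$; the failure probability across the $L$ copies is $2^{-\Omega(L)} \le 1/(10n)$, and a union bound over $i\in[n]$ finishes. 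The total dimension is $L \cdot m \cdot O(\log n) = O(\varepsilon^{-1} n^{1-1/p} \log^2 n)$; to get the stated $O(p\,\varepsilon^{-1}\log\varepsilon^{-1}\, n^{1-1/p}\log n)$ bound one is more careful — absorbing one $\log n$ by noting the deterministic sub-sketch with parameter $0.1$ only needs $O(\log\varepsilon^{-1})$ or $O(1)$ dimensions rather than $O(\log n)$, and handling the $p$-dependence and the $\mathrm{head}_r$ threshold $r$ exactly as in the proof of Theorem~\ref{thm:upper}.

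The main obstacle I anticipate is twofold. First, making the deterministic $\ell_1$-point query sub-sketch genuinely \emph{never} overestimate while keeping its dimension a clean $O(\log\varepsilon^{-1})$ (or $O(\log n)$) with additive guarantee $0.1\|x^B\|_1$ — one must check that the known deterministic constructions can be "shifted" without blowing up the dimension, and that the shift amount ($\Theta(\|x^B\|_1)$) is itself known exactly (it is, being linear). Second, the probability analysis for the bucket weight is the delicate part: I need the tail contribution of a single bucket to concentrate, which is exactly where the assumption $\varepsilon^{-1} \le n^{1/p-\Omega(1)}$ is used — it guarantees $r \ll m$ so that heavy hitters avoid the bucket with polynomially small failure probability, letting the remaining tail-sum concentrate via a Chernoff/Bernstein bound rather than merely Markov. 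The bookkeeping of constants so that $0.1\|x^{h(i)}\|_1 \le \varepsilon\|x\|_p$ (rather than $O(\varepsilon)\|x\|_p$) — i.e.\ choosing $m$ with the right constant and the deterministic parameter small enough — is routine but needs to be done carefully to land exactly on the claimed bound.
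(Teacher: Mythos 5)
The high-level skeleton — hash into $\Theta(\varepsilon^{-1}n^{1-1/p})$ buckets, run a constant-accuracy deterministic $\ell_1$-sketch per bucket shifted to never overestimate, repeat independently and take the max — is the paper's, and the dimension count for a single copy ($O(\varepsilon^{-1}n^{1-1/p}\log n)$) is also right. But there is a genuine gap in how you boost to the claimed bound, and the patch you suggest does not work.

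You propose running $L=O(\log n)$ independent copies so that every one of the $n$ coordinates gets a light bucket, and then recovering the stated $O(p\,\varepsilon^{-1}\log\varepsilon^{-1}\,n^{1-1/p}\log n)$ dimension by arguing the deterministic sub-sketch with constant accuracy really only needs $O(\log\varepsilon^{-1})$ dimensions rather than $O(\log n)$. That second step is false: the $\log n$ in the NNW sketch (Theorem~\ref{thm:deterministic_no_overestimation}) comes from the ambient dimension, not from the accuracy; with accuracy $0.1$ the sketch length is $\Theta(\log n)$, and this cannot be traded for a $\log\varepsilon^{-1}$. So your plan lands at $O(\varepsilon^{-1}n^{1-1/p}\log^2 n)$, which is worse than claimed. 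The paper instead keeps the $O(\log n)$ per bucket and reduces the \emph{number of copies} from $O(\log n)$ to $t=O(p\log\varepsilon^{-1})$. The key observation you're missing is what makes this possible: after flooring the combined estimate at zero, $\bar x_i=\max(0,\max_\ell\hat x_i^{(\ell)})\ge 0\ge x_i-\varepsilon\|x\|_p$ automatically for every $i$ with $x_i\le\varepsilon\|x\|_p$, so the only coordinates that can possibly fail are those with $x_i>\varepsilon\|x\|_p$, and there are at most $\varepsilon^{-p}$ of them. Thus the union bound is over $\varepsilon^{-p}$ indices, not $n$, and $t=O(p\log\varepsilon^{-1})$ copies drive the per-coordinate failure probability to $\varepsilon^p/10$, giving total failure $1/10$. (You write ``output the maximum'' but omit the $\max(0,\cdot)$, which is precisely the step that trims the union bound to $\varepsilon^{-p}$ events.)

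A secondary issue: your light-bucket argument unnecessarily imports the $\text{head}_r/\text{tail}_r$ split and a Chernoff bound, and with it the hypothesis $\varepsilon^{-1}\le n^{1/p-\Omega(1)}$, which appears in Theorem~\ref{thm:upper} but is \emph{not} part of the statement of Theorem~\ref{thm:alg_no_overestimation}. The paper avoids this entirely: it only needs the expected bucket mass bound $\mathbb{E}\bigl[\sum_{i'\ne i,\,b(i')=b(i)}x_{i'}\bigr]\le\|x\|_1/k\le(\varepsilon/4)\|x\|_p$ followed by Markov, which gives a per-coordinate success probability of $3/4$ for a single copy with no restriction on $\varepsilon$. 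Since the final argument only needs constant per-copy success (boosted by the $\varepsilon^{-p}$-sized union bound), Markov suffices and the extra assumption is not required.
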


One important tool we will use in establishing this theorem is the following theorem on \emph{deterministic} point query.

\begin{theorem} \label{thm:deterministic_no_overestimation} \cite{nelson2014deterministic}
    There exists a deterministic sketching algorithm which creates a sketch of length $O(\varepsilon^{-2} \log n)$ that, for any $x \in \BR^n,$ can produce a vector $\hat{x}$ such that $\|x-\hat{x}\|_\infty \le \varepsilon \cdot \|x_{-\lfloor 1/\varepsilon^2 \rfloor}\|_1,$ where $x_{-\lfloor 1/\varepsilon^2 \rfloor}$ is the vector with the $\lfloor 1/\varepsilon^2 \rfloor$ largest entries (in absolute value) removed.
\end{theorem}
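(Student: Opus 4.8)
The plan is to realize Theorem~\ref{thm:deterministic_no_overestimation} through the standard compressed-sensing machinery: take the sketch matrix to satisfy the Restricted Isometry Property (RIP), and take the decoder to be $\ell_1$-minimization (basis pursuit). Fix $s=\lceil C_0^2\,\varepsilon^{-2}\rceil=\Theta(\varepsilon^{-2})$, where $C_0$ is the absolute constant from the $\ell_2/\ell_1$ basis-pursuit recovery bound, and let $\Phi\in\BR^{m\times n}$ be a matrix whose restricted isometry constant of order $2s$ satisfies $\delta_{2s}<\sqrt2-1$. It is a standard fact that such a $\Phi$ exists with $m=O\!\left(s\log(n/s)\right)=O(\varepsilon^{-2}\log n)$ rows (e.g.\ a random $\pm1/\sqrt m$ Bernoulli matrix has this property with high probability, so in particular some fixed matrix does); fix one such $\Phi$ once and for all. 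The sketch of $x$ is $y=\Phi x\in\BR^m$, and the estimate $\hat x$ is any vector of minimum $\ell_1$ norm among those consistent with the sketch, i.e.\ a minimizer of $\|z\|_1$ over all $z$ with $\Phi z=y$. Both the sketching map and the decoder are deterministic once $\Phi$ is fixed, and the sketch length is $m=O(\varepsilon^{-2}\log n)$, as required.

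\textbf{Error guarantee.} Since $\delta_{2s}<\sqrt2-1$, Candès's noiseless $\ell_2/\ell_1$ recovery theorem applies: for \emph{every} $x\in\BR^n$ the basis-pursuit output obeys $\|\hat x-x\|_2\le C_0\,s^{-1/2}\,\|x-x_s\|_1$, where $x_s$ is the best $s$-term approximation of $x$ (its $s$ largest-magnitude coordinates). Two trivial observations then close the argument. First, $\|x-x_s\|_1=\|x_{-s}\|_1$ is exactly the tail $\ell_1$ norm after deleting the $s$ largest entries, and since $s\ge 1/\varepsilon^2\ge\lfloor 1/\varepsilon^2\rfloor$, deleting more entries only shrinks the tail, so $\|x_{-s}\|_1\le\|x_{-\lfloor 1/\varepsilon^2\rfloor}\|_1$. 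Second, $\|\hat x-x\|_\infty\le\|\hat x-x\|_2$. Combining these with $s\ge C_0^2\varepsilon^{-2}$ gives $\|\hat x-x\|_\infty\le C_0\,s^{-1/2}\,\|x_{-\lfloor 1/\varepsilon^2\rfloor}\|_1\le\varepsilon\,\|x_{-\lfloor 1/\varepsilon^2\rfloor}\|_1$, which is the claimed bound. Note that the guarantee is automatically uniform (``for any $x$''), since RIP-based recovery is a for-all guarantee: the single fixed $\Phi$ works simultaneously for all inputs, including signed ones.

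\textbf{Main obstacle.} There is no hard analytic step; the work is in correctly invoking and assembling the two classical black boxes — existence of an order-$2s$ RIP matrix with $m=O(s\log(n/s))$ rows and constant below $\sqrt2-1$ (Candès--Tao / Baraniuk--Davenport--DeVore--Wakin), and the $\ell_2/\ell_1$ instance-optimality of basis pursuit under that hypothesis (Candès / Cohen--Dahmen--DeVore) — and checking that the parameters line up: $s=\Theta(\varepsilon^{-2})$, $m=O(\varepsilon^{-2}\log(n\varepsilon^2))=O(\varepsilon^{-2}\log n)$, and the floors and the constant $C_0$ absorbed correctly. If one wanted a proof avoiding convex programming, the identical $\ell_2/\ell_1$ bound follows from a greedy RIP decoder (CoSaMP or iterative hard thresholding), which is likewise a deterministic polynomial-time procedure; and if one wanted an explicit matrix rather than a probabilistic existence argument, one could substitute a code-based incoherent/RIP-1 construction, at the cost of an extra logarithmic factor in $m$ (and hence a mild weakening of the row bound), which is presumably why the cited statement is phrased with an existential ``there exists a deterministic sketching algorithm''.
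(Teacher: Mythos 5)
Your proof is correct, but it is worth noting that the paper does not prove this statement at all --- it imports it verbatim from \cite{nelson2014deterministic} --- and your route differs from the one in that source. You reduce to classical compressed sensing: an order-$2s$ RIP matrix with $s=\Theta(\varepsilon^{-2})$ and $m=O(s\log(n/s))$ rows, basis pursuit, the $\ell_2/\ell_1$ instance-optimality bound $\|\hat x-x\|_2\le C_0 s^{-1/2}\|x-x_s\|_1$, and the relaxations $\|\cdot\|_\infty\le\|\cdot\|_2$ and $\|x_{-s}\|_1\le\|x_{-\lfloor 1/\varepsilon^2\rfloor}\|_1$; all of these steps check out, and the parameter bookkeeping ($C_0 s^{-1/2}\le\varepsilon$, $m=O(\varepsilon^{-2}\log n)$) is right. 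The cited paper instead builds the sketch from an \emph{incoherent} matrix (explicitly constructible from Reed--Solomon codes) with a correlation-based decoder that yields the $\ell_\infty$ tail guarantee directly, without passing through $\ell_2$ or solving a convex program. The trade-offs are roughly as you anticipate: your argument is shorter and entirely black-box, but the matrix is only shown to exist via the probabilistic method and the decoder is an LP; the incoherence route gives an explicit matrix, a simpler decoder, and in fact a slightly stronger dimension bound $O(\varepsilon^{-2}\min\{\log n,(\log n/\log(1/\varepsilon))^2\})$ in some regimes. For the way the theorem is used downstream (Corollary~\ref{cor:deterministic_no_overestimation} invokes it only with $\varepsilon=0.1$, needs only the stated $\ell_\infty$ tail guarantee plus linearity of the sketching map, and then shifts and rescales to enforce no overestimation), either construction suffices, since your sketch $x\mapsto\Phi x$ is linear and the guarantee is a for-all (hence deterministic, one-sided-error-compatible) statement. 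One cosmetic point: the $\ell_1$ minimizer need not be unique, so to have a well-defined deterministic decoder you should fix a tie-breaking rule; the recovery bound holds for every minimizer, so this costs nothing.
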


While this result may not seem sufficient to get a nearly linear dependence in the sketch length of $\varepsilon^{-1}$ (in fact, the $\varepsilon^{-2}$ dependence is known to be necessary), we in fact avoid this issue by only applying theorem \ref{thm:deterministic_no_overestimation} when $\varepsilon$ is a constant.
Indeed, as a direct corollary, we have the following result.

\begin{corollary} \label{cor:deterministic_no_overestimation}
    There exists a deterministic sketching algorithm which creates a sketch of length $O(\log n)$ that, for any $x \in \BR_{\ge 0}^n$, creates a vector $\hat{x}$ such that for all $i \in [n]$, $x_i - \frac{2}{9} \cdot (\|x\|_1-x_i) \le \hat{x}_i \le x_i$.
\end{corollary}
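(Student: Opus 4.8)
The plan is to apply Theorem~\ref{thm:deterministic_no_overestimation} with the fixed constant accuracy parameter $\varepsilon_0 = 1/10$. This produces a deterministic sketch of length $O(\varepsilon_0^{-2}\log n) = O(\log n)$ that, for any $x$, yields a vector $\hat{y}$ with $\|\hat{y}-x\|_\infty \le \tfrac{1}{10}\,\|x_{-100}\|_1$, where $x_{-100}$ drops the $100$ largest (in magnitude) coordinates. I would augment this sketch with one extra dimension recording $\|x\|_1 = \sum_i x_i$, which keeps the total length $O(\log n)$; the reported estimate $\hat{x}$ will be an affine rescaling of $\hat{y}$.

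The first step is a nonnegativity observation: for every $i\in[n]$ and every integer $r\ge 1$, $\|x_{-r}\|_1 \le \|x\|_1 - x_i$. Indeed, if $x_i$ is among the $r$ largest coordinates then it is among those deleted, and deleting the other (nonnegative) coordinates only decreases the $\ell_1$ mass; if $x_i$ is not among the $r$ largest, then each deleted coordinate is at least $x_i$, so the deleted mass is at least $x_i$. Taking $r=100$ converts the guarantee of Theorem~\ref{thm:deterministic_no_overestimation} into the coordinate-dependent estimate
\[
|\hat{y}_i - x_i| \;\le\; \tfrac{1}{10}\bigl(\|x\|_1 - x_i\bigr)\qquad\text{for all }i\in[n].
\]

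The second step is the rescaling. Define $\hat{x}_i := \tfrac{10}{9}\bigl(\hat{y}_i - \tfrac{1}{10}\|x\|_1\bigr)$, which is computable from the augmented sketch. From $\hat{y}_i - x_i \le \tfrac{1}{10}(\|x\|_1 - x_i)$ one gets $\hat{y}_i - \tfrac{1}{10}\|x\|_1 \le \tfrac{9}{10}x_i$, hence $\hat{x}_i \le x_i$; this holds for every nonnegative $x$ since Theorem~\ref{thm:deterministic_no_overestimation} is worst-case, so the no-overestimation property is deterministic. From $x_i - \hat{y}_i \le \tfrac{1}{10}(\|x\|_1 - x_i)$ one gets $\hat{y}_i \ge \tfrac{11}{10}x_i - \tfrac{1}{10}\|x\|_1$, and multiplying $\hat{y}_i - \tfrac{1}{10}\|x\|_1 \ge \tfrac{11}{10}x_i - \tfrac{2}{10}\|x\|_1$ by $\tfrac{10}{9}$ gives $\hat{x}_i \ge \tfrac{11}{9}x_i - \tfrac{2}{9}\|x\|_1 = x_i - \tfrac{2}{9}(\|x\|_1 - x_i)$, which is exactly the asserted lower bound.

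There is no genuine obstacle here; the whole argument is a short affine post-processing of the black box. The only point requiring care is the choice of the constant: carrying the computation with a generic parameter $\varepsilon_0$ gives $\hat{x}_i \ge x_i - \tfrac{2\varepsilon_0}{1-\varepsilon_0}(\|x\|_1-x_i)$, so to land on the factor $\tfrac{2}{9}$ one needs $\tfrac{2\varepsilon_0}{1-\varepsilon_0}\le \tfrac{2}{9}$, i.e.\ $\varepsilon_0\le\tfrac{1}{10}$, and $\varepsilon_0=\tfrac{1}{10}$ makes it exactly tight while keeping $\varepsilon_0^{-2}=O(1)$ so that the sketch length stays $O(\log n)$.
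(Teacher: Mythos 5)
Your proof is correct and is essentially identical to the paper's: both instantiate Theorem~\ref{thm:deterministic_no_overestimation} with accuracy $1/10$, use the observation $\|x_{-100}\|_1 \le \|x\|_1 - x_i$ (valid for nonnegative $x$), track $\|x\|_1$ with one extra coordinate, and apply the same affine correction $\hat{x}_i = \frac{10}{9}\bigl(\hat{y}_i - \frac{1}{10}\|x\|_1\bigr)$. The only differences are cosmetic: you spell out why $\|x_{-r}\|_1 \le \|x\|_1 - x_i$ and note the general $\varepsilon_0$ dependence, neither of which changes the argument.
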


\begin{proof}
    Note that $\|x_{-100}\|_1 \le \|x\|_1 - x_i$ for any $i$. Thus, by Theorem \ref{thm:deterministic_no_overestimation}, there exists a deterministic $O(\log n)$-length sketch that can find a vector $\tilde{x}$ such that 
    $$1.1 x_i - 0.1 \|x\|_1 = x_i - 0.1 (\|x\|_1-x_i) \le \tilde{x}_i \le x_i + 0.1 (\|x\|_1-x_i) = 0.9 x_i + 0.1 \|x\|_1.$$
    By increasing the length of the sketch by $1$, we can also keep track of $\|x\|_1 = \sum_{i = 1}^{n} x_i,$ since the coordinates of $x$ are nonnegative. Therefore, by letting $\hat{x}_i = \frac{10}{9} \cdot \left(\tilde{x}_i - 0.1  \|x\|_1 \right),$ we have that for all $i \in [n]$,
\begin{equation} \label{eq:overestimation}
    \frac{10}{9}\left(1.1 x_i - 0.1 \|x\|_1 - 0.1 \|x\|_1\right) \le \hat{x}_i \le \frac{10}{9} \cdot \left(0.9 x_i + 0.1 \|x\|_1 - 0.1  \|x\|_1\right).
\end{equation}
    However, we can lower bound the left hand side of Equation \eqref{eq:overestimation} by $x_i - \frac{2}{9} \left(\|x\|_1 - x_i\right)$, and simplify the right hand side of Equation \eqref{eq:overestimation} as $x_i$. Overall, we have that $x_i - \frac{2}{9} \cdot (\|x\|_1-x_i) \le \hat{x}_i \le x_i$, as desired.
\end{proof}

We are now ready to prove Theorem \ref{thm:alg_no_overestimation}. Our sketching algorithm works as a combination of the deterministic sketch based on Corollary \ref{cor:deterministic_no_overestimation} along with hashing-based ideas similar to Count-Min.

\begin{proof}[Proof of Theorem \ref{thm:alg_no_overestimation}]
    First, we hash $U = [n]$ uniformly into $k = 4 \varepsilon^{-1} \cdot n^{1-1/p}$ buckets $B_1, B_2, \dots, B_k$. For the set of indices mapped to some bucket $B_j$, we use Corollary \ref{cor:deterministic_no_overestimation} to estimate $x_i$ for each $i \in B_j$. Indeed, we obtain an estimate $\hat{x}_i$ for each $i \in [n]$ such that for all $i$, if $b(i)$ is index of the bucket that $i$ is mapped to, then deterministically,
$$x_i - \frac{2}{9} \cdot \sum_{\substack{i' \neq i \\b(i) = b(i')}} x_{i'} \le \hat{x}_i \le x_i.$$
    In addition, we know that for any fixed $i$, the expectation of the sum of $x_{i'}$ over $b(i) = b(i'), i' \neq i$ is at most $\frac{1}{k} \cdot \|x\|_1 \le \frac{\varepsilon}{4} \cdot \|x\|_p$, since each $x_{i'}$ is hashed to the same bucket as $x_i$ with $1/k$ probability and $\|x\|_1 \le \|x\|_p \cdot n^{1-1/p}$. So, for any fixed $i$, with probability at least $3/4,$ $x_i - \frac{2}{9} \cdot \varepsilon \cdot \|x\|_p \le \hat{x}_i$, and deterministically, $\hat{x}_i \le x_i$.

    Our final estimate will be to run $t = O(p \cdot \log \varepsilon^{-1})$ independent copies of this algorithm. If $\hat{x}_i^{(\ell)}$ represents the estimate by the $\ell$th copy of this algorithm, our final estimate for each $x_i$ will be $\bar{x}_i = \max(0, \max_{1 \le \ell \le t} \hat{x}_i^{(\ell)}).$ Note that $\bar{x}_i$ is never an overestimate, and for any $i$ such that $x_i \le \varepsilon \cdot \|x\|_p$, $\bar{x}_i \ge 0$, so with probability $1$, it does not underestimate by more than $\varepsilon \cdot \|x\|_p$. Finally, if we define $S \subset [n]$ to be the set of indices $i$ such that $\hat{x}_i \ge \varepsilon \cdot \|x\|_p$, it is immediate that $|S| \le \varepsilon^{-p}$. For each $i \in S,$ we know that $x_i - \frac{2}{9} \cdot \varepsilon \cdot \|x\|_p \le \bar{x}_i$ with probability at least $1-(3/4)^{t} \le 1 - \varepsilon^p/10,$ which means by a union bound over $i \in S$, we have that $\bar{x}_i$ does not underestimate by more than $\varepsilon \cdot \|x\|_p$ for all $i \in S$ with probability at least $9/10$.

    The length of the sketch is $O(k \cdot t \cdot \log n) = O(p \cdot \varepsilon^{-1} \log \varepsilon^{-1} \cdot n^{1-1/p} \log n)$, since we hash $t$  times into $k$ buckets, and use a sketch of length $O(\log n)$ on each one.
\end{proof}

\begin{Remark}
    We note that this procedure is indeed a \emph{sketching algorithm} for the same reason that Count-Min sketch is, and since the composition of two linear sketches is a linear sketch. 
\end{Remark}

\paragraph{Acknowledgments:}  This research was supported in part by the NSF TRIPODS program (awards CCF-1740751 and DMS-2022448), NSF awards CCF-2006798 and CCF-1815840, Office of Naval Research grant N00014-18-1-2562, Simons Investigator Awards, and an NSF Graduate Fellowship. 

\newcommand{\etalchar}[1]{$^{#1}$}

\section{Appendix} 

\begin{proof}(of Lemma \ref{RankLB1}): 
    Note that if $R$ is symmetric, this is trivial since $tr(R) = \sum \lambda_i(R)$ and $||R||_F^2 = \sum \lambda_i^2(R).$ Now, for an arbitrary $R$, let $S = (R+R^T)/2.$ Then, $rk(S) \ge tr(S)^2/||S||_F^2.$ However, it is clear that $rk(S) \le 2 \cdot rk(R)$, $tr(S) = tr(R),$ and $||S||_F^2 \le ||R||_F^2.$ Therefore, $2 \cdot rk(R) \ge rk(S) \ge tr(S)^2/||S||_F^2 \ge tr(R)^2/||R||_F^2$.
\end{proof}

\begin{proof}(of Lemma \ref{RankLB2}):
    First, we pick the following set of elements of $M$. Let $i_1, j_1$ be such that $|M_{i_1, j_1}|$ is maximized (if there is a tie, choose any maximal $i_1, j_1$). Then, for all $2 \le k \le n,$ in that order, we choose $i_k \in [n] \backslash \{i_1, \dots, i_{k-1}\}$ and $j_k \in [n] \backslash \{j_1, \dots, j_{k-1}\}$ that maximizes $|M_{i_k j_k}|$. This creates two permutations $i_1, i_2, \dots, i_n$ and $j_1, j_2, \dots, j_n$ of $[n]$.

    Now, let $r$ be some power of $2$ between $1$ and $T$ and consider the interval $T \subset [n]$ such that $t \in T$ if $|M_{i_t, j_t}| \in [r, 2r].$ Let $c_r$ be the size of this interval. Now, consider the matrix restricted to rows $i_t$ for $t \in T$ and columns $j_t$ for $t \in T$. Then, all of the diagonal terms (where $M_{i_tj_t}$ are the new diagonals) are between $r$ and $2r$ in magnitude, and all of the terms in the matrix are also at most $2r$ (or else we would have found a different maximum). Call this restricted matrix $R$, and let $R'$ be the matrix created when each row of $R$ is either preserved or negated so that the diagonal entries are all positive. Then, the diagonal entries are between $r$ and $2r$, which means that $tr(R') = \Theta(r \cdot c_r)$. Also, since all entries of $R'$ are at most $2r$ in magnitude and the sum of the absolute values of the entries of each row is at most $T,$ we have that $||R'||_F^2 = O(r \cdot T \cdot c_r).$ Therefore, by Lemma \ref{RankLB1}, $rk(R') \gtrsim (r \cdot c_r)^2/(r \cdot T \cdot c_r) = r c_r/T,$ and it is clear that $rk(R) = rk(R')$.

    Now observe that there exists some $r = 2^i$, $0 \le i \le \lfloor\log_2 T\rfloor$, such that $r c_r \ge 0.25 n.$ If not, then for all $r = 2^i$ with $0 \le i \le \lfloor \log_2 T \rfloor$, we have that $c_{2^i} < 0.25 n/2^i$. Adding these together we get that $\sum_{r = 2^i: i = 0}^{\lfloor \log_2 T \rfloor} c_r < 0.5 n.$ However, we have $\sum_{r = 2^i: i = 0}^{\lfloor \log_2 T \rfloor} c_r \ge 0.5 n.$ This is because (by the assumption) all of the diagonal entries of $M$ are at least $1$, which means that after any $n' < 0.5 n$ steps of picking $(i, j),$ there must be some diagonal entry left in the submatrix. Therefore, the number of $k$ such that $|M_{i_k, j_k}| \ge 1$ is at least $0.5$, so $\sum_{r = 2^i: i = 0}^{\lfloor \log_2 T \rfloor} c_r \ge 0.5 n$. Therefore, there is some $r$ such that $r c_r/T \ge 0.25 \cdot n/T$, so there is some submatrix $R$ of $M$ such that $rk(R) = \Omega(n/T).$ Since $rk(M) \ge rk(R),$ this concludes the proof.
\end{proof}

\end{document}